\theoremstyle{plain}
\newtheorem{theorem}{Theorem}
\newtheorem{lemma}[theorem]{Lemma}
\newtheorem{corollary}[theorem]{Corollary}
\theoremstyle{definition}
\newtheorem{defn}[theorem]{Definition}
\newenvironment{pic}[1][]
{\begin{aligned}\begin{tikzpicture}[#1]}
{\end{tikzpicture}\end{aligned}}
\newcommand{\edges}[1][]%
{
}
\def\calign@preamble{%
   &\hfil\strut@
    \setboxz@h{\@lign$\m@th\displaystyle{##}$}%
    \ifmeasuring@\savefieldlength@\fi
    \set@field
    \hfil
    \tabskip\alignsep@
}
\let\cmeasure@\measure@
\patchcmd\cmeasure@{\divide\@tempcntb\tw@}{}{}{}
\patchcmd\cmeasure@{\divide\@tempcntb\tw@}{}{}{}
\patchcmd\cmeasure@{\ifodd\maxfields@
  \global\advance\maxfields@\@ne
  \fi}{}{}{}
\newenvironment{calign}
{%
  \let\align@preamble\calign@preamble
  \let\measure@\cmeasure@
  \align
}
{%
  \endalign
}
\newcommand\tinymatrix[1]
\renewcommand\thickspace{\kern2pt} \scriptstyle\begin{smallmatrix} #1 \end{smallmatrix} \hspace{-2pt} \right)}
\newcommand\ignore[1]{}
\tikzset{smallbox/.style={draw, fill=white, minimum height=0.45cm, minimum width=0.45cm, inner sep=-100pt}}
    \gdef\node@@on@layer{%
      \setbox\tikz@tempbox=\hbox\bgroup\pgfonlayer{#1}\unhbox\tikz@tempbox\endpgfonlayer\egroup}
\def\node@on@layer{\aftergroup\node@@on@layer}
\def\thickness{0.7pt}
\tikzstyle{oldmorphism}=[minimum width=30pt, minimum height=16pt, draw, font=\small, inner sep=0pt, fill=white, line width=\thickness]
\tikzstyle{cross}=[preaction={draw=white, -, line width=10pt}]
\tikzstyle{braid}=[double=black, line width=3*\thickness, double distance=\thickness, white]
\tikzstyle{string}=[line width=\thickness]
\tikzstyle{scalar}=[circle, inner sep=0pt, minimum width=15pt, draw, line width=\thickness]
\tikzstyle{dot}=[circle, draw=black, fill=black!25, inner sep=.4ex, line width=\thickness, node on layer=foreground]
\tikzstyle{blackdot}=[circle, draw=black, fill=black!100, inner sep=.4ex, line width=\thickness, node on layer=foreground]
\tikzstyle{graydot}=[circle, draw=black, fill=gray!40!white, inner sep=.4ex, line width=\thickness, node on layer=foreground]
\tikzstyle{whitedot}=[circle, draw=black, fill=white, inner sep=.4ex, line width=\thickness, node on layer=foreground]
\tikzstyle{mixedmorphism}=[morphism, minimum width=30pt, minimum height=16pt, draw, font=\small, inner sep=0pt, fill=white, line width=\thickness,rounded corners=1ex]
\tikzstyle{thick}=[line width=\thickness]
\tikzstyle{tiny}=[font=\tiny]
\tikzset{arrow/.style={decoration={
    markings,
    mark=at position #1 with \arrow{thickarrow}},
    postaction=decorate}
}
\tikzset{reverse arrow/.style={decoration={
    markings,
    mark=at position #1 with \arrow{reversethickarrow}},
    postaction=decorate}
}
\newif\ifblack\pgfkeys{/tikz/black/.is if=black}
\newif\ifwedge\pgfkeys{/tikz/wedge/.is if=wedge}
\newif\ifvflip\pgfkeys{/tikz/vflip/.is if=vflip}
\newif\ifhflip\pgfkeys{/tikz/hflip/.is if=hflip}
\newif\ifhvflip\pgfkeys{/tikz/hvflip/.is if=hvflip}
\newif\ifconnectnw\pgfkeys{/tikz/connect nw/.is if=connectnw}
\newif\ifconnectne\pgfkeys{/tikz/connect ne/.is if=connectne}
\newif\ifconnectsw\pgfkeys{/tikz/connect sw/.is if=connectsw}
\newif\ifconnectse\pgfkeys{/tikz/connect se/.is if=connectse}
\newif\ifconnectn\pgfkeys{/tikz/connect n/.is if=connectn}
\newif\ifconnects\pgfkeys{/tikz/connect s/.is if=connects}
\newif\ifconnectnwf\pgfkeys{/tikz/connect nw >/.is if=connectnwf}
\newif\ifconnectnef\pgfkeys{/tikz/connect ne >/.is if=connectnef}
\newif\ifconnectswf\pgfkeys{/tikz/connect sw >/.is if=connectswf}
\newif\ifconnectsef\pgfkeys{/tikz/connect se >/.is if=connectsef}
\newif\ifconnectnf\pgfkeys{/tikz/connect n >/.is if=connectnf}
\newif\ifconnectsf\pgfkeys{/tikz/connect s >/.is if=connectsf}
\newif\ifconnectnwr\pgfkeys{/tikz/connect nw </.is if=connectnwr}
\newif\ifconnectner\pgfkeys{/tikz/connect ne </.is if=connectner}
\newif\ifconnectswr\pgfkeys{/tikz/connect sw </.is if=connectswr}
\newif\ifconnectser\pgfkeys{/tikz/connect se </.is if=connectser}
\newif\ifconnectnr\pgfkeys{/tikz/connect n </.is if=connectnr}
\newif\ifconnectsr\pgfkeys{/tikz/connect s </.is if=connectsr}
\tikzset{keylengthnw/.initial=\connectheight}
\tikzset{keylengthn/.initial =\connectheight}
\tikzset{keylengthne/.initial=\connectheight}
\tikzset{keylengthsw/.initial=\connectheight}
\tikzset{keylengths/.initial =\connectheight}
\tikzset{keylengthse/.initial=\connectheight}
\tikzset{connect nw length/.style={connect nw=true, keylengthnw={#1}}}
\tikzset{connect n length/.style ={connect n =true, keylengthn ={#1}}}
\tikzset{connect ne length/.style={connect ne=true, keylengthne={#1}}}
\tikzset{connect sw length/.style={connect sw=true, keylengthsw={#1}}}
\tikzset{connect s length/.style ={connect s =true, keylengths ={#1}}}
\tikzset{connect se length/.style={connect se=true, keylengthse={#1}}}
\tikzset{connect nw < length/.style={connect nw <=true, keylengthnw={#1}}}
\tikzset{connect n < length/.style ={connect n <=true,  keylengthn ={#1}}}
\tikzset{connect ne < length/.style={connect ne <=true, keylengthne={#1}}}
\tikzset{connect sw < length/.style={connect sw <=true, keylengthnw={#1}}}
\tikzset{connect s < length/.style ={connect s <=true,  keylengths ={#1}}}
\tikzset{connect se < length/.style={connect se <=true, keylengthse={#1}}}
\tikzset{connect nw > length/.style={connect nw >=true, keylengthnw={#1}}}
\tikzset{connect n > length/.style ={connect n >=true,  keylengthn ={#1}}}
\tikzset{connect ne > length/.style={connect ne >=true, keylengthne={#1}}}
\tikzset{connect sw > length/.style={connect sw >=true, keylengthsw={#1}}}
\tikzset{connect s > length/.style ={connect s >=true,  keylengths ={#1}}}
\tikzset{connect se > length/.style={connect se >=true, keylengthse={#1}}}
\newlength\morphismheight
\newlength\minimummorphismwidth
\newlength\stateheight
\newlength\minimumstatewidth
\newlength\connectheight
\tikzset{width/.initial=\minimummorphismwidth}
  \let\thickness=\pgfmathresult
\tikzset{forward arrow style/.style={every to/.style, decoration={
    markings,
    mark=at position 0.5 with \arrow{thickarrow}},
    postaction=decorate}}
\tikzset{reverse arrow style/.style={every to/.style, decoration={
    markings,
    mark=at position 0.5 with \arrow{reversethickarrow}},
    postaction=decorate}}
\newcommand{\tinycomult}[1][dot]{
\smash{\raisebox{-1.8pt}{\hspace{-5pt}\ensuremath{\begin{pic}[scale=0.34,string]
    \node (0) at (0,0) {};
    \node[#1, inner sep=1.5pt] (1) at (0,0.55) {};
    \node (2) at (-0.5,1) {};
    \node (3) at (0.5,1) {};
    \draw (0.center) to (1.center);
    \draw (1.center) to [out=left, in=down, out looseness=1.5] (2.center);
    \draw (1.center) to [out=right, in=down, out looseness=1.5] (3.center);
\end{pic}
}\hspace{-3pt}}}}
\newcommand{\tinycounit}[1][dot]{
\smash{\raisebox{-1.8pt}{\ensuremath{\hspace{-3pt}\begin{pic}[scale=0.34,string]
        \node (0) at (0,0) {};
        \node (1) at (0,1) {};
        \node[#1, inner sep=1.5pt] (d) at (0,0.55) {};
        \draw (0.center) to (d.center);
    \end{pic}
    \hspace{-1pt}}}}}
\newcommand{\tinymult}[1][dot]{
\smash{{\hspace{-5pt}\ensuremath{\begin{aligned}\begin{tikzpicture}[scale=0.36,thick,yscale=-1]
    \node (0) at (0,0) {};
    \node (2) at (-0.5,1) {};
    \node (3) at (0.5,1) {};
    \draw (0.center) to (0,0.55);
    \draw (0,0.55) to [out=left, in=down, out looseness=1.5] (2.center);
    \draw (0,0.55) to [out=right, in=down, out looseness=1.5] (3.center);
    \node[#1, inner sep=1.5pt] (1) at (0,0.55) {};
\end{tikzpicture}\end{aligned}
}\hspace{-3pt}}}}
\newcommand{\tinyunit}[1][dot]{
\smash{{\ensuremath{\hspace{-3pt}\begin{aligned}\begin{tikzpicture}[scale=0.4,thick,yscale=-1]
        \node (0) at (0,0) {};
        \node (1) at (0,1) {};
        \draw (0.center) to (0,0.55);
        \node[#1, inner sep=1.5pt] (d) at (0,0.55) {};
    \end{tikzpicture}\end{aligned}
    \hspace{-1pt}}}}}
\newcommand{\tinyhandle}[1][dot]{\raisebox{-2pt}{\ensuremath{\hspace{-3pt}\begin{pic}[scale=0.4,string]
        \node (0) at (0,0) {};
        \node[dot, inner sep=1.0pt] (1) at (0,0.3) {};
        \node[dot, inner sep=1.0pt] (2) at (0,0.7) {};
        \node (3) at (0,1) {};
        \draw (0.center) to (1.center);
        \draw (2.center) to (3.center);
        \draw[in=180, out=180, looseness=2] (1.center) to (2.center);
        \draw[in=0, out=0, looseness=2] (1.center) to (2.center);
\end{pic}\hspace{-1pt}}}}
\newcommand\cat[1]{\ensuremath{\mathbf{#1}}}
\renewcommand\dag{\ensuremath{\dagger}}
\newcommand\sxrightarrow[1]{\xrightarrow{\smash{#1}}}
\newcommand\sxto[1]{\sxrightarrow{#1}}
\newcommand\ket[1]{\ensuremath{| #1 \rangle}}
\newcommand\bra[1]{\ensuremath{\langle #1 |}}
\newcommand\ud{\ensuremath{\mathrm{d}}}
\def\swangle{-145}
\def\seangle{-35}
\def\nwangle{145}
\def\neangle{35}
\def\newyscale{0.8}
\def\newxscale{0.75}
\tikzstyle{dot}=[inner sep=0.7mm,minimum width=0pt,minimum
\tikzstyle{black dot}=[dot,fill=black]
\tikzstyle{white dot}=[dot,fill=white]
\tikzstyle{gray dot}=[dot,fill=gray!40!white]
\tikzset{
    downtriangle/.style={
        draw,
        shape border rotate=180,
        regular polygon,
        regular polygon sides=3,
        fill=black,
        node distance=1cm,
        minimum height=1.5em,
        inner sep=0pt
    }
}
\tikzset{
    uptriangle/.style={
        draw,
        shape border rotate=0,
        regular polygon,
        regular polygon sides=3,
        fill=black,
        node distance=1cm,
        minimum height=1.5em,
        inner sep=0pt
    }
}
\newcommand\smallwhitediagram{\ensuremath{\smash{\begin{aligned}
\begin{tikzpicture}[string, scale=0.2]
\node (black) [uptriangle, fill=white, minimum height=5pt] at (0,1) {};
\node (white) [downtriangle, fill=white, minimum height=5pt] at (1.1,0.2) {};
\draw [shorten <=-1pt, shorten >=-1pt] (black.corner 3) to [out=-30, in=150] (white.corner 3);
\draw [shorten >=-1pt] (-0.8,-0.5) to [out=up, in=-150] (black.corner 2);
\draw [shorten <=-1pt] (white.corner 1) to (1.1,-0.5);
\draw [shorten >=-1pt] (1.9,1.7) to [out=down, in=30] (white.corner 2);
\draw [shorten <=-1pt] (black.corner 1) to (0,1.7);
\end{tikzpicture}
\end{aligned}}}}
\newcommand\smallwhitediagramflip{\ensuremath{\smash{\begin{aligned}
\begin{tikzpicture}[string, scale=0.2]
\node (black) [uptriangle, fill=white, minimum height=5pt] at (0,1) {};
\node (white) [downtriangle, fill=white, minimum height=5pt] at (-1.1,0.2) {};
\draw [shorten <=-1pt, shorten >=-1pt] (black.corner 2) to [out=-150, in=30] (white.corner 2);
\draw [shorten >=-1pt] (0.8,-0.5) to [out=up, in=-30] (black.corner 3);
\draw [shorten <=-1pt] (white.corner 1) to (-1.1,-0.5);
\draw [shorten >=-1pt] (-1.9,1.7) to [out=down, in=150] (white.corner 3);
\draw [shorten <=-1pt] (black.corner 1) to (0,1.7);
\end{tikzpicture}
\end{aligned}}}}
\newcommand\smallblackdiagram{\ensuremath{\smash{\begin{aligned}
\begin{tikzpicture}[string, scale=0.2]
\node (black) [uptriangle, minimum height=5pt] at (0,1) {};
\node (white) [downtriangle, minimum height=5pt] at (1.1,0.2) {};
\draw [shorten <=-1pt, shorten >=-1pt] (black.corner 3) to [out=-30, in=150] (white.corner 3);
\draw [shorten >=-1pt] (-0.8,-0.5) to [out=up, in=-150] (black.corner 2);
\draw [shorten <=-1pt] (white.corner 1) to (1.1,-0.5);
\draw [shorten >=-1pt] (1.9,1.7) to [out=down, in=30] (white.corner 2);
\draw [shorten <=-1pt] (black.corner 1) to (0,1.7);
\end{tikzpicture}
\end{aligned}}}}
\newcommand\smallblackdiagramflip{\ensuremath{\smash{\begin{aligned}
\begin{tikzpicture}[string, scale=0.2]
\node (black) [uptriangle, minimum height=5pt] at (0,1) {};
\node (white) [downtriangle, minimum height=5pt] at (-1.1,0.2) {};
\draw [shorten <=-1pt, shorten >=-1pt] (black.corner 2) to [out=-150, in=30] (white.corner 2);
\draw [shorten >=-1pt] (0.8,-0.5) to [out=up, in=-30] (black.corner 3);
\draw [shorten <=-1pt] (white.corner 1) to (-1.1,-0.5);
\draw [shorten >=-1pt] (-1.9,1.7) to [out=down, in=150] (white.corner 3);
\draw [shorten <=-1pt] (black.corner 1) to (0,1.7);
\end{tikzpicture}
\end{aligned}}}}
\newcommand\whitecomonoid[1]{\ensuremath{(#1,\tinycomult[whitedot],\tinycounit[whitedot])}}
\newcommand\blackcomonoid[1]{\ensuremath{(#1,\tinycomult[blackdot],\tinycounit[blackdot])}}
\newcommand\graycomonoid[1]{\ensuremath{(#1,\tinycomult[graydot],\tinycounit[graydot])}}
\title{Abstract structure of unitary oracles for quantum algorithms}
\date{5 June 2014}
    \author{
    \begin{tabular}{c@{\hspace{2cm}}c}
    W. J. Zeng\footnote{Department of Computer Science, University of Oxford} & Jamie Vicary${}^*$\footnote{Centre for Quantum Technologies, National University of Singapore}
    \\
    william.zeng@cs.ox.ac.uk & jamie.vicary@cs.ox.ac.uk
    \end{tabular}
    }
    \author{William Zeng
    \institute{Department of Computer Science, University of Oxford}
    \email{william.zeng@cs.ox.ac.uk}
    \and
    Jamie Vicary
    \institute{Centre for Quantum Technologies, University of Singapore\\
    and Department of Computer Science, University of Oxford}
    \email{jamie.vicary@cs.ox.ac.uk}
    }
\begin{document}

\maketitle

\begin{abstract}
We show that a pair of complementary dagger-Frobenius algebras, equipped with a self-conjugate comonoid homomorphism onto one of the algebras, produce a nontrivial unitary morphism on the product of the algebras. This gives an abstract understanding of the structure of an oracle in a quantum computation, and we apply this understanding to develop a new algorithm for  the deterministic identification of group homomorphisms into abelian groups. We also discuss an application to the categorical theory of signal-flow networks.
\end{abstract}

\section{Introduction}

\subsection{Overview}

Pairs of complementary dagger-Frobenius algebras play an important role in the high-level characterization of quantum phenomena~\cite{vicary-tqa, bobross}, as the algebraic content of mutually unbiased bases. In Section~\ref{sec:unitary}, we show that if a such a pair is equipped with a self-conjugate comonoid homomorphism onto one of the algebras, a \textit{unitary} map can be constructed that has the same abstract structure as an \textit{oracle} in the theory of quantum algorithms. This gives insight into the logical structure of quantum algorithms and opens up a new avenue for their generalization.

Most known quantum algorithms are constructed using these black-box quantum oracles, whose structure can be depicted graphically in the following way:
\begin{equation}
\label{eq:qoracle}
\begin{aligned}
\begin{tikzpicture}[string,yscale=\newyscale,yscale=0.8]
    \node (dot) [blackdot] at (0,1) {};
    \node (f) [morphism, wedge] at (0.7,2) {$f$};
    \node (m) [whitedot] at (1.4,3) {};
\draw (0,0.25)
        node [below] {$x$}
    to (0,1)
    to [out=left, in=south] (-0.7,2)
    to (-0.7,3.75)
        node [above] {$x$};
\draw (0,1)
    to [out=right, in=south] (f.south);
\draw  (f.north)
    to [out=up, in=left] (1.4,3)
    to [out=right, in=up] +(0.7,-1)
    to (2.1,0.25)
        node [below] {$y$};;
\draw (m.center) to +(0,0.75) node [above] {$y \oplus f(x)$};
\end{tikzpicture}
\end{aligned}
\end{equation}
 Here we read the diagram from bottom to top, defining a map of type $\mathbb{C}^n\otimes\mathbb{C}^m \to \mathbb{C}^n\otimes\mathbb{C}^m$ that acts as $|x\rangle \otimes |y \rangle\mapsto|x\rangle \otimes |y\oplus f(x)\rangle$ for a group product $\oplus$. Section~\ref{sec:unitary} contains a full abstract description. Oracle-based algorithms include the Deutsch-Jozsa, Grover, and hidden subgroup algorithms.  In the Deutsch-Jozsa and Grover algorithms the oracle implements a function \mbox{$f:S\to\{0,1\}$} where $S$ is a finite set. In the hidden subgroup algorithm, the oracle implements a function $f:G \to S$ where $G$ is a finite group and $S$ is a finite set. In \cite{vicary-tqa} it was shown that the unitary oracle described in Section \ref{sec:unitary} characterizes the structure of these well-known algorithms.
\def\licsscale{0.70}

For these oracles to be physically implementable, they must be \textit{unitary operators}. In this paper we give an abstract proof of unitarity for these operators using categorical algebra. In Section~\ref{sec:algorithm} we apply this result to develop a new quantum algorithm for the identification of group homomorphisms into an abelian group, in a number of queries which is equal to the number of simple factors of the target group. The graphical approach provides a simple proof of correctness of the algorithm, and leads to an algorithm which is more general than existing work in the literature~\cite{hoyer-conjops}.

In Section~\ref{sec:signalflow} we investigate an application to the theory of signal-flow networks~\cite{baezerbele, fong-transfer, sobocinski}. We show that the formalism contains dagger-Frobenius algebras equipped with self-conjugate homomorphisms, and that, as a consequence, the network representing a single resistor is unitary.

\paragraph{Acknowledgements.} We are grateful to John Baez and Pawel Sobocinski for useful discussions about signal-flow networks. Section~4 of this paper has some technical overlap with~\cite{sobocinski} and was prepared independently. We are grateful to the authors for pointing out their work to us in the prepublication phase of this article. Will Zeng acknowledges the support of the Rhodes Trust in funding this work.

\subsection{Frobenius monoids and complementarity}

In this Section we collect some standard results from the literature~\cite{bobross}. We assume some familiarity with the graphical calculus for symmetric monoidal dagger-categories~\cite{selinger}. We use a notation in which morphisms are drawn from bottom-to-top.

\begin{defn}
In a monoidal category, a \textit{comonoid} is a triple \whitecomonoid{A} of an object $A$, a morphism $\tinycomult[whitedot] : A \to A \otimes A$ called the comultiplication, and a morphism $\tinycounit[whitedot] : A \to I$ called the counit, satisfying coassociativity and counitality equations:
\def\frobscale{0.5}
\begin{calign}
\begin{aligned}
\begin{tikzpicture}[thick, scale=0.7, yscale=-1]
\draw (0,0) to [out=up, in=\swangle] (0.5, 1);
\draw (1,0) to [out=up, in=\seangle] (0.5,1);
\draw (2,0) to [out=up, in=\seangle] (1.25,2);
\draw (0.5,1) to [out=up, in=\swangle] (1.25, 2);
\draw (1.25,2) to (1.25, 3);
\node [whitedot] at (0.5,1) {};
\node [whitedot] at (1.25,2) {};
\end{tikzpicture}
\end{aligned}
\quad=\quad
\begin{aligned}
\begin{tikzpicture}[xscale=-1, thick, scale=0.7, yscale=-1]
\draw (0,0) to [out=up, in=\swangle] (0.5, 1);
\draw (1,0) to [out=up, in=\seangle] (0.5,1);
\draw (2,0) to [out=up, in=\seangle] (1.25,2);
\draw (0.5,1) to [out=up, in=\swangle] (1.25, 2);
\draw (1.25,2) to (1.25, 3);
\node [whitedot] at (0.5,1) {};
\node [whitedot] at (1.25,2) {};
\end{tikzpicture}
\end{aligned}
&\qquad\qquad\qquad
\begin{aligned}
\begin{tikzpicture}[thick, scale=0.7, yscale=-1]
\draw (0,-1.5) to (0,-0.5) to [out=up, in=\swangle] (0.75,0.5) node [whitedot] {} to (0.75,1.5);
\draw (1.5,-0.5) node [whitedot] {} to [out=up, in=\seangle] (0.75,0.5);
\end{tikzpicture}
\end{aligned}
\quad=\quad
\begin{aligned}
\begin{tikzpicture}[thick, scale=0.7, yscale=-1]
\draw (0,0) to (0,3);
\end{tikzpicture}
\end{aligned}
\quad=\quad
\begin{aligned}
\begin{tikzpicture}[thick, xscale=-1, scale=0.7, yscale=-1]
\draw (0,-1.5) to (0,-0.5) to [out=up, in=\swangle] (0.75,0.5) node [whitedot] {} to (0.75,1.5);
\draw (1.5,-0.5) node [whitedot] {} to [out=up, in=\seangle] (0.75,0.5);
\end{tikzpicture}
\end{aligned}
\end{calign}
\end{defn}

\noindent
In a monoidal dagger-category, we can apply the dagger operation to these structures to obtain the associated monoid. We can then ask for the comonoid and monoid to interact in various ways.
\begin{defn}
In a monoidal dagger-category, a comonoid \whitecomonoid{A} is \emph{dagger-Frobenius} when the following equation holds:
\begin{equation}\label{eq:frobenius}
\begin{aligned}
\begin{tikzpicture}[scale=0.7, thick]
    \draw (0,0) to (0,1) to [out=up, in=\swangle] (0.5,2) node [whitedot] {} to (0.5,3);
    \draw (0.5,2) to [out=\seangle, in=\nwangle] (1.5,1) node [whitedot] {};
    \draw (1.5,0) to (1.5,1) to [out=\neangle, in=down] (2,2) to (2,3);
\end{tikzpicture}
\end{aligned}
    \quad = \quad
\begin{aligned}
\begin{tikzpicture}[scale=0.7, thick, xscale=-1]
    \draw (0,0) to (0,1) to [out=up, in=\swangle] (0.5,2) node [whitedot] {} to (0.5,3);
    \draw (0.5,2) to [out=\seangle, in=\nwangle] (1.5,1) node [whitedot] {};
    \draw (1.5,0) to (1.5,1) to [out=\neangle, in=down] (2,2) to (2,3);
\end{tikzpicture}
\end{aligned}
  \end{equation}
\end{defn}
\begin{defn}
In a symmetric monoidal dagger-category, a \textit{classical structure} is a commutative dagger-Frobenius comonoid \whitecomonoid{A} satisfying the \textit{specialness} condition:
\begin{equation}
\begin{aligned}
\begin{tikzpicture}[thick, scale=0.7]
\draw (0,0.25) to (0,1) node [whitedot] {} to [out=\nwangle, in=down] (-0.5,1.5) to [out=up, in=\swangle] (0,2) node [whitedot] {} to (0,2.75);
\draw (0,1) to [out=\neangle, in=down] (0.5,1.5) to [out=up, in=\seangle] (0,2);
\end{tikzpicture}
\end{aligned}
\quad=\quad
  \begin{aligned}
  \begin{tikzpicture}[yscale=\newyscale, thick, yscale=0.7]
  \draw (-0.5,0) to (-0.5,3.2);
  \end{tikzpicture}
  \end{aligned}
  \end{equation}
\end{defn}

\begin{defn}
In a symmetric monoidal dagger-category, a dagger-Frobenius comonoid is \emph{symmetric} when the following condition holds:
\begin{equation}
\begin{aligned}
\begin{tikzpicture}
\draw (0,0) node [whitedot] {} to (0,0.5) node [whitedot] {} to [out=\nwangle, in=down] (-0.5,1.0) to [out=up, in=down] (0.5,2);
\draw (0,0.5) to [out=\neangle, in=down] (0.5,1) to [out=up, in=down] (-0.5,2);
\end{tikzpicture}
\end{aligned}
\quad=\quad
\begin{aligned}
\begin{tikzpicture}
\draw (0,0) node [whitedot] {} to (0,0.5) node [whitedot] {} to [out=\nwangle, in=down] (-0.5,1.0) to [out=up, in=down] (-0.5,2);
\draw (0,0.5) to [out=\neangle, in=down] (0.5,1) to [out=up, in=down] (0.5,2);
\end{tikzpicture}
\end{aligned}
\end{equation}
\end{defn}

\begin{defn}
In a symmetric monoidal dagger-category, the \emph{dimension} $d(A)$ of an object $A$ equipped with a dagger-Frobenius comonoid \whitecomonoid{A} is given by the following composite:
\begin{equation}\label{eq:dim}
    \ud(A)
    \quad := \quad
    \begin{aligned}\begin{tikzpicture}[yscale=\newyscale,xscale=-1]
        \node (0) at (0,0) {};
        \node[whitedot] (1) at (0,0.66) {};
        \node (2) at (-0.5,1.2) {};
        \node (3) at (0.5,1.2) {};
        \node (4) at (-0.5,2.0) {};
        \node (5) at (0.5,2.0) {};
        \draw[string] (0.center) to (1.center);
        \draw[string, out=180, in=270] (1.center) to (2.center);
        \draw[string, out=0, in=270] (1.center) to (3.center);
        \draw[string, out=90, in=270] (2.center) to (5.center);
        \draw[string, out=90, in=270] (3.center) to (4.center);;
        \node[whitedot] (6) at (0,2.54) {};        
        \node (7) at (0,3.2) {};
        \draw[string] (0.center) node [whitedot] {} to (1);
        \draw[string] (6.center) to (7.center) node [whitedot] {$$};
        \draw[string, in=left, out=up] (4.center) to node [auto] {$$} (6.center);
        \draw[string, in=right, out=up] (5.center) to node [auto, swap] {$$} (6.center);
    \end{tikzpicture}\end{aligned}
  \end{equation}
\end{defn}

\noindent
When the algebra is commutative and special, equation~\eqref{eq:dim} can be simplified to the composition of the unit and counit.

\begin{defn}[Complementarity]
\label{def:complementarity}
In a symmetric monoidal dagger-category, two special symmetric dagger-Frobenius comonoids \whitecomonoid{A} and \graycomonoid{A} are \emph{complementary} when the following equation holds:
\begin{equation}
\label{eq:complementarity}
\ud(A) \,\,
\begin{pic}[string, yscale=\newyscale, xscale=\newxscale]
\draw (-0.5,0.25) to (-0.5,1) node [graydot] {} to [out=left, in=right] (-1,2) node [graydot] {} to [out=left, in=right] (-1.5,1.5) node [whitedot] {} to [out=left, in=down] (-2,2) to [out=up, in=left] (-0.75,3) node (a) [whitedot] {} to [out=right, in=right] (-0.5,1);
\draw (a.center) to +(0,0.75);
\end{pic}
\quad=\quad\,\,\,
\begin{pic}[string, yscale=\newyscale, xscale=\newxscale]
\draw (0,0.25) to (0,1) node [graydot] {};
\draw (0,3) node [whitedot] {} to (0,3.75);
\end{pic}
\end{equation}
\end{defn}

\noindent
Note that this is not a symmetric condition between the gray and white structures. However, thanks to the symmetric property of the dagger-Frobenius algebras, it is equivalent to the following alternative condition:
\begin{equation}
\ud(A) \,\,
\begin{pic}[string, yscale=\newyscale, xscale=\newxscale, yscale=-1]
\draw (-0.5,0.25) to (-0.5,1) node [whitedot] {} to [out=left, in=right] (-1,2) node [whitedot] {} to [out=left, in=right] (-1.5,1.5) node [graydot] {} to [out=left, in=down] (-2,2) to [out=up, in=left] (-0.75,3) node (a) [graydot] {} to [out=right, in=right] (-0.5,1);
\draw (a.center) to +(0,0.75);
\end{pic}
\quad=\quad\,\,\,
\begin{pic}[string, yscale=\newyscale, xscale=\newxscale]
\draw (0,0.25) to (0,1) node [graydot] {};
\draw (0,3) node [whitedot] {} to (0,3.75);
\end{pic}
\end{equation}
The daggers of these equations give rise to two further equivalent conditions.

By the symmetric property of the dagger-Frobenius algebras, this condition is equivalent to

\begin{defn}
In a monoidal dagger-category, a comonoid homomorphism \mbox{$f:\blackcomonoid{A} \to \graycomonoid{B}$} between dagger-Frobenius comonoids is \emph{self-conjugate} when the following property holds:
\begin{equation}
\label{eq:comonoidhomomorphismselfconjugate}
\begin{aligned}
\begin{tikzpicture}[yscale=\newyscale, xscale=\newxscale, thick]
\node [morphism, wedge] (f) at (2,1) {$f$};
\draw (0,-1) to [out=up, in=left, in looseness=0.9] (1,2) node [graydot] {} to (1,2.5) node [graydot] {};
\draw (1,2) to [out=right, in=up] (f.north);
\draw (f.south) to [out=down, in=left] (3,0) node [blackdot] {} to [out=right, in=down, out looseness=0.9] (4,3);
\draw (3,0) to (3,-0.5) node [blackdot] {};
\node [graydot] at (1,2) {};
\end{tikzpicture}
\end{aligned}
\quad=\quad
\begin{aligned}
\begin{tikzpicture}[string]
\node (f) at (0,0) [morphism, wedge, hflip] {$f$};
\draw (0,-1.5) to (f.south);
\draw (f.north) to (0,1.5);
\end{tikzpicture}
\end{aligned}
\end{equation}
\end{defn}
\begin{lemma}
\label{lem:comonoidhomomorphismselfconjugate}
In {\bf Hilb}, comonoid homomorphisms $f:\blackcomonoid{A} \to \graycomonoid{B}$ of classical structures are self-conjugate.
\end{lemma}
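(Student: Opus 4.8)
The plan is to descend to the concrete description of classical structures in $\cat{Hilb}$, since the lemma is asserted specifically in this category. It is standard that a classical structure on a finite-dimensional Hilbert space $A$ amounts to a choice of orthonormal basis $\{\ket{i}\}$, with respect to which the comultiplication $\tinycomult[blackdot]$ acts as $\ket{i}\mapsto\ket{i}\otimes\ket{i}$ and the counit $\tinycounit[blackdot]$ as $\ket{i}\mapsto 1$; write $\{\ket{j}\}$ for the basis picked out by the gray structure on $B$. I would open by recalling this correspondence.

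The key step is to pin down what a comonoid homomorphism $f\colon\blackcomonoid{A}\to\graycomonoid{B}$ looks like in these bases. Writing $f\ket{i}=\sum_j f_{ji}\ket{j}$, preservation of comultiplication gives $\sum_j f_{ji}\ket{jj}=\sum_{j,k}f_{ji}f_{ki}\ket{jk}$. Matching diagonal terms forces $f_{ji}^2=f_{ji}$, so $f_{ji}\in\{0,1\}$, and matching off-diagonal terms forces $f_{ji}f_{ki}=0$ for $j\neq k$, so each column has at most one nonzero entry; preservation of the counit then gives $\sum_j f_{ji}=1$, so exactly one entry per column equals $1$. Hence $f$ is induced by a function $\phi$ on basis labels, $f\ket{i}=\ket{\phi(i)}$, and in particular the matrix of $f$ is \emph{real}.

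I would then read off the two sides of \eqref{eq:comonoidhomomorphismselfconjugate}. The bottom-right gadget (a black comultiplication capped by a black unit) is the \emph{cup} $\sum_i\ket{ii}$ of the black structure, and the top-left gadget (a gray multiplication capped by a gray counit) is the \emph{cap} $\sum_j\bra{jj}$ of the gray structure; bending the two wires of $f$ through this cup and cap is precisely the transpose, so the left-hand side computes $f^{T}\colon B\to A$, while the box reflected top-to-bottom on the right-hand side denotes the adjoint $f^{\dagger}\colon B\to A$. Thus the displayed equation asserts exactly $f^{T}=f^{\dagger}$, which is equivalent to the matrix of $f$ being real, i.e. to $f=\overline{f}=f_*$, justifying the name.

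The conclusion is then immediate: since the matrix of $f$ has entries in $\{0,1\}$ we have $f^{\dagger}=\overline{f}^{\,T}=f^{T}$, so \eqref{eq:comonoidhomomorphismselfconjugate} holds and $f$ is self-conjugate. The only genuine content lies in the second step, identifying the comonoid-homomorphism conditions with ``real $0/1$ matrix''; the main obstacle I anticipate is purely a matter of careful bookkeeping, namely correctly matching the graphical cup, cap, and reflected box to $f^{T}$ and $f^{\dagger}$ so that the diagram is recognized as the realness condition rather than something weaker.
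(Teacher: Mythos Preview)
Your proposal is correct and follows essentially the same approach as the paper: both arguments reduce to the fact that a comonoid homomorphism between classical structures has a real $\{0,1\}$ matrix in the copyable bases, so that transpose equals adjoint. The paper cites this characterization of comonoid homomorphisms and then checks equation~\eqref{eq:comonoidhomomorphismselfconjugate} by computing matrix elements $\bra{j}(-)\ket{i}$ on each side directly, whereas you derive the $\{0,1\}$-matrix fact from scratch and then identify the two sides globally as $f^{T}$ and $f^{\dagger}$; these are cosmetic differences in packaging rather than a different argument.
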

\begin{proof}
Recall that comonoid homomorphisms between classical structures in \cat{Hilb} are exactly classical functions between the copyable points~\cite{OrthBasis:2008}. The linear maps on either side of~\eqref{eq:comonoidhomomorphismselfconjugate} will be the same if and only if their matrix elements are the same, obtained by composing with $\ket i$ at the bottom and $\bra j$ at the top. On the left-hand side, this gives the following result:
\begin{equation}
\begin{aligned}
\begin{tikzpicture}[yscale=\newyscale, xscale=\newxscale, thick]
\node [morphism, wedge] (f) at (2,1) {$f$};
\draw (0,0) node [state] {$i$} to [out=up, in=left, in looseness=0.9] (1,2) node [graydot] {} to (1,2.5) node [graydot] {};
\draw (1,2) to [out=right, in=up] (f.north);
\draw (f.south)
    to [out=down, in=left] (3,0)
        node [blackdot] {}
    to [out=right, in=down, out looseness=0.9] (4,2)
        node [state, hflip] {$j$};
\draw (3,0) to (3,-0.5) node [blackdot] {};
\end{tikzpicture}
\end{aligned}
\quad=\quad
\begin{aligned}
\begin{tikzpicture}[string]
\node (f) [morphism, wedge] at (0,0) {$f$};
\draw (0,-0.75) node [state] {$j$} to (f.south);
\draw (0,0.75) node [state, hflip] {$i$} to (f.north);
\end{tikzpicture}
\end{aligned}
\quad=\quad
\left\{
\begin{array}{ll}
1 & \text{ if } i=f(j), \\
0 & \text{ if } i \neq f(j).
\end{array}
\right.
\end{equation}
On the right we can do this calculation:
\begin{equation}
\begin{aligned}
\begin{tikzpicture}[string, yscale=\newyscale, xscale=\newxscale]
\node (f) [morphism, wedge, hflip] at (0,0) {$f$};
\draw (0,-0.75) node [state] {$i$} to (f.south);
\draw (0,0.75) node [state, hflip] {$j$} to (f.north);
\end{tikzpicture}
\end{aligned}
\quad=\quad
\left(
\begin{aligned}
\begin{tikzpicture}[string]
\node (f) [morphism, wedge] at (0,0) {$f$};
\draw (0,-0.75) node [state] {$j$} to (f.south);
\draw (0,0.75) node [state, hflip] {$i$} to (f.north);
\end{tikzpicture}
\end{aligned}
\right) ^\dag
\quad=\quad
\left\{
\begin{array}{ll}
1 & \text{ if } i=f(j) \\
0 & \text{ if } i \neq f(j)
\end{array}
\right\}^\dag 
\quad = \quad
\left\{
\begin{array}{ll}
1 & \text{ if } i=f(j), \\
0 & \text{ if } i \neq f(j).
\end{array}
\right.
\end{equation}
This is the same result as for the left-hand side, and so expression~\eqref{eq:comonoidhomomorphismselfconjugate} holds.
\end{proof}

\section{Unitary oracles}

\label{sec:unitary}

\subsection{Complementarity via unitarity}
A pair of symmetric dagger-Frobenius algebras can be used to build a linear map in the following way:
\begin{equation}
\label{eq:generalizedcnot}
\sqrt{\ud(A)}\,\,
\begin{aligned}
\begin{tikzpicture}[yscale=\newyscale,string]
\node (b) [graydot] at (0,0) {};
\node (w) [whitedot] at (1,1) {};
\draw (-0.75,2) to [out=down, in=left] (b.center);
\draw (b.center) to [out=right, in=left] (w.center);
\draw (w.center) to (1,2);
\draw (b.center) to (0,-1);
\draw (w.center) to [out=right, in=up] (1.75,-1);
\end{tikzpicture}
\end{aligned}
\end{equation}
Here we have assumed that we operate in a category where square roots of scalars exist.  The two algebras are complementary exactly when this composite is unitary, as we show in the following theorem.

\begin{theorem}[Complementarity via a unitary]
\label{thm:complementarityunitary}
  In a dagger symmetric monoidal category, two symmetric dagger-Frobenius algebras are complementary if and only if the composite~\eqref{eq:generalizedcnot} is unitary.
\end{theorem}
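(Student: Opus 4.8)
The plan is to verify directly that the composite $U := \sqrt{\ud(A)}\,G$ of~\eqref{eq:generalizedcnot} satisfies $U^\dagger\circ U = \id_{A\otimes A}$ and $U\circ U^\dagger = \id_{A\otimes A}$, where $G\colon A\otimes A\to A\otimes A$ is the underlying diagram, built from the gray comultiplication on the left input and the white multiplication joining the gray algebra's right leg to the right input; abstractly $G = (\id_A\otimes \tinymult[whitedot])\circ(\tinycomult[graydot]\otimes\id_A)$. Since the scalar factor squares to $\ud(A)$, I would first write $U^\dagger\circ U = \ud(A)\,(G^\dagger\circ G)$, so that the target equation becomes $G^\dagger\circ G = \ud(A)^{-1}\,\id_{A\otimes A}$, with the dimension prefactor matching exactly the $\ud(A)$ appearing in the complementarity equation~\eqref{eq:complementarity}.

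Next I would expand $G^\dagger\circ G$ graphically. Because $G^\dagger$ is the vertical reflection of $G$, the composite contains a white multiplication immediately followed by a white comultiplication on the inner wire, flanked by a gray comultiplication below and a gray multiplication above. The key step is to apply the white dagger-Frobenius law~\eqref{eq:frobenius} to this central white mult--comult, sliding one leg across the junction; after this rewrite the gray comultiplication and gray multiplication are joined by two paths — one direct, one routed through the white structure — so that they close into exactly the loop of~\eqref{eq:complementarity}. Using the symmetry isotopy together with the Frobenius cups and caps supplied by the two algebras to bend the remaining free wires, the equation $G^\dagger\circ G = \ud(A)^{-1}\,\id_{A\otimes A}$ becomes, move for move, the single-strand complementarity identity~\eqref{eq:complementarity}. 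Since every manipulation used — the Frobenius law, the symmetry, and the wire-bendings — is invertible, this establishes a genuine equivalence of diagrams, giving both directions of the theorem at once for this half of unitarity.

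For the reverse composite I would run the mirror-image computation on $U\circ U^\dagger = \ud(A)\,(G\circ G^\dagger)$. The same sequence of Frobenius and symmetry moves produces the colour-swapped and daggered version of the loop, which is precisely one of the equivalent reformulations of complementarity recorded immediately after Definition~\ref{def:complementarity}. Hence $U^\dagger\circ U = \id$ is equivalent to complementarity, $U\circ U^\dagger = \id$ is equivalent to its dual form, and because these two conditions coincide, the pair of algebras is complementary if and only if $U$ is unitary.

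I expect the main obstacle to be carrying out the central diagrammatic reduction using only the \emph{symmetric} dagger-Frobenius axioms, without invoking specialness or commutativity: one must track each application of the Frobenius law~\eqref{eq:frobenius} and of the symmetry, confirm that the symmetric property pins down the orientation of the emerging loop so that it matches~\eqref{eq:complementarity} rather than its mirror image, and check that the bookkeeping of the $\sqrt{\ud(A)}$ scalars collapses to exactly the single factor $\ud(A)$ demanded by the complementarity equation.
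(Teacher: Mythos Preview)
Your overall strategy matches the paper's: compose~\eqref{eq:generalizedcnot} with its adjoint, use the Frobenius laws and (co)associativity to rewrite the result until the loop on the left of~\eqref{eq:complementarity} appears as a sub-diagram, and then observe that complementarity collapses the whole thing to the identity. The forward implication is correctly sketched, and the plan for $U\circ U^\dagger$ via the colour-swapped form of complementarity is also what the paper does.

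The gap is in your handling of the converse. You claim that the equation $\ud(A)\,G^\dagger\circ G = \id_{A\otimes A}$ ``becomes, move for move,'' the complementarity identity~\eqref{eq:complementarity} because every manipulation used is invertible. But these two equations live at different types: one compares endomorphisms of $A\otimes A$, the other endomorphisms of $A$. Invertible graphical moves --- Frobenius laws, symmetry, bending a wire with a Frobenius cup or cap --- preserve the total number of open wire-ends; they cannot turn a four-boundary equation into a two-boundary one. What the paper actually does for the converse is take the rewritten form of $\ud(A)\,G^\dagger\circ G = \id_{A\otimes A}$ and compose both sides with the gray counit on the upper-left wire and the white unit on the lower-right wire. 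On the identity side this produces exactly the right-hand side of~\eqref{eq:complementarity}; on the other side, the inserted (co)unit is absorbed by the adjacent gray multiplication and white comultiplication via their unit laws, leaving precisely $\ud(A)$ times the complementarity loop. This plugging-in is a one-way deduction, not an invertible rewrite; together with the forward direction it yields the biconditional. You should make this step explicit rather than appeal to invertibility.
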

\begin{proof}
  Composing ~\eqref{eq:generalizedcnot} with its adjoint in one order, we obtain the following:
  \tikzset{every picture/.style={scale=0.95,yscale=0.8}}
  \begin{equation}
  \label{eq:generalizedcnotunitaryproof}
  \begin{aligned}
  \begin{tikzpicture}[yscale=\newyscale, xscale=\newxscale,string]
  \node at (-1.5,-2.6) {$\ud (A)$};
  \node (A) at (0,0) {};
  \node (B) at (1.75,0) {};
  \node (b1) [graydot] at (0,-1) {};
  \node (w1) [whitedot] at (1,-2) {};
  \node (w2) [whitedot] at (1,-3) {};
  \node (b2) [graydot] at (0,-4) {};
  \node (C) at (0,-5) {};
  \node (D) at (1.75,-5) {};
  \draw (A.center) to (b1.center);
  \draw (b1.center) to [out=right, in=left] (w1.center);
  \draw (w1.center) to (w2.center);
  \draw (w2.center) to [out=left, in=right] (b2.center);
  \draw (b2.center) to (C.center);
  \draw (w2.center) to [out=right, in=up] (D.center);
  \draw (w1.center) to [out=right, in=down] (B.center);
  \draw (b1.center) to [out=left, in=left] (b2.center);
  \end{tikzpicture}
  \end{aligned}
  \,\,=
  \,\,
  \hspace{-3pt}
  \begin{aligned}
  \begin{tikzpicture}[yscale=\newyscale, xscale=\newxscale,string]
  \node at (-2.1,-2.6) {$\ud (A)$};
  \node (A) at (-1.75,0) {};
  \node (B) at (0.5,0) {};
  \node (w1) [whitedot] at (0.5,-1.0) {};
  \node (w2) [whitedot] at (1.25,-3) {};
  \node (w3) [whitedot] at (0,-2) {};
  \node (b1) [graydot] at (-1,-2) {};
  \node (b2) [graydot] at (0,-4) {};
  \node (b3) [graydot] at (-0.5,-1) {};
  \node (C) at (0,-5) {};
  \node (D) at (2,-5) {};
  \draw (A.center) to [out=down, in=left] (b1.center);
  \draw (w1.center) to [out=right, in=up] (w2.center);
  \draw (w2.center) to [out=left, in=right] (b2.center);
  \draw (b2.center) to (C.center);
  \draw (w2.center) to [out=right, in=up] (D.center);
  \draw (w1.center) to [out=up, in=down] (B.center);
  \draw (b1.center) to [out=down, in=left] (b2.center);
  \draw (b1.center) to [out=right, in=left] (b3.center);
  \draw (b3.center) to [out=right, in=left] (w3.center);
  \draw (w3.center) to [out=right, in=left] (w1.center);
  \end{tikzpicture}
  \end{aligned}
  \,\,= \,\,
  \begin{aligned}
  \begin{tikzpicture}[yscale=\newyscale, xscale=\newxscale,string,yscale=0.833]
  \node at (-1.6,-3.1) {$\ud (A)$};  
  \node (A) at (-1,0) {};
  \node (B) at (1.5,0) {};
  \node (w1) [whitedot] at (1.5,-1.0) {};
  \node (w2) [whitedot] at (1.0,-2) {};
  \node (w3) [whitedot] at (0.5,-3) {};
  \node (b1) [graydot] at (0.5,-4) {};
  \node (b2) [graydot] at (0,-5) {};
  \node (b3) [graydot] at (0,-2) {};
  \node (C) at (0,-6) {};
  \node (D) at (2.5,-6) {};
  \draw (A.center) to [out=down, in=left, in looseness=0.6] (b2.center);
  \draw (w1.center) to [out=left, in=up] (w2.center);
  \draw (w2.center) to [out=right, in=right] (b1.center);
  \draw (b2.center) to (C.center);
  \draw (w1.center) to [out=right, in=up, out looseness=0.6] (D.center);
  \draw (w1.center) to [out=up, in=down] (B.center);
  \draw (b1.center) to [out=down, in=right] (b2.center);
  \draw (b1.center) to [out=left, in=left] (b3.center);
  \draw (b3.center) to [out=right, in=left] (w3.center);
  \draw (w2.center) to [out=left, in=right] (w3.center);
  \end{tikzpicture}
  \end{aligned}
  \end{equation}
  If the complementarity condition~\eqref{eq:complementarity} holds then this is clearly the identity on \mbox{$A \otimes A$}. The other composite can be shown to be the identity in a similar way, and so~\eqref{eq:generalizedcnot} is unitary.

  Conversely, suppose~\eqref{eq:generalizedcnot} is unitary. Then the final expression of~\eqref{eq:generalizedcnotunitaryproof} certainly equals the identity on $A \otimes A$:
\begin{equation}
  \begin{aligned}
  \begin{tikzpicture}[yscale=\newyscale, xscale=\newxscale,string,yscale=0.833]
  \draw (0.5,-5) to (0.5,1);
  \draw (-1.5,-5) to (-1.5,1);
  \end{tikzpicture}
  \end{aligned}
\quad=\quad\hspace{-8pt}
\begin{aligned}
  \begin{tikzpicture}[yscale=\newyscale, xscale=\newxscale,string,yscale=0.833]
  \node at (-1.6,-3.1) {$\ud (A)$};  
  \node (A) at (-1,0) {};
  \node (B) at (1.5,0) {};
  \node (w1) [whitedot] at (1.5,-1.0) {};
  \node (w2) [whitedot] at (1.0,-2) {};
  \node (w3) [whitedot] at (0.5,-3) {};
  \node (b1) [graydot] at (0.5,-4) {};
  \node (b2) [graydot] at (0,-5) {};
  \node (b3) [graydot] at (0,-2) {};
  \node (C) at (0,-6) {};
  \node (D) at (2.5,-6) {};
  \draw (A.center) to [out=down, in=left, in looseness=0.6] (b2.center);
  \draw (w1.center) to [out=left, in=up] (w2.center);
  \draw (w2.center) to [out=right, in=right] (b1.center);
  \draw (b2.center) to (C.center);
  \draw (w1.center) to [out=right, in=up, out looseness=0.6] (D.center);
  \draw (w1.center) to [out=up, in=down] (B.center);
  \draw (b1.center) to [out=down, in=right] (b2.center);
  \draw (b1.center) to [out=left, in=left] (b3.center);
  \draw (b3.center) to [out=right, in=left] (w3.center);
  \draw (w2.center) to [out=left, in=right] (w3.center);
  \end{tikzpicture}
  \end{aligned}
\end{equation}
 Composing with the black counit at the top-left and the white unit at the bottom-right then gives back complementarity condition~\eqref{eq:complementarity} as required:
\begin{equation}
  \begin{aligned}
  \begin{tikzpicture}[yscale=\newyscale, xscale=\newxscale,string,yscale=0.833]
  \node (w) [whitedot] at (-0.5,-4) {};
  \node (b) [graydot] at (-1.5,0) {}; 
  \draw (-0.5,-4) to (-0.5,1);
  \draw (-1.5,-5) to (-1.5,0);
  \end{tikzpicture}
  \end{aligned}
  \,\,\,\,= 
  \begin{aligned}
  \begin{tikzpicture}[yscale=\newyscale, xscale=\newxscale,string,yscale=0.833]
  \node at (-1.6,-3.1) {$\ud (A)$}; 
  \node (w) [graydot] at (-1,0) {};
  \node (b) [whitedot] at (2.5,-6) {}; 
  \node (A) at (-1,0) {};
  \node (B) at (1.5,0) {};
  \node (w1) [whitedot] at (1.5,-1.0) {};
  \node (w2) [whitedot] at (1.0,-2) {};
  \node (w3) [whitedot] at (0.5,-3) {};
  \node (b1) [graydot] at (0.5,-4) {};
  \node (b2) [graydot] at (0,-5) {};
  \node (b3) [graydot] at (0,-2) {};
  \node (C) at (0,-6) {};
  \node (D) at (2.5,-6) {};
  \draw (A.center) to [out=down, in=left, in looseness=0.6] (b2.center);
  \draw (w1.center) to [out=left, in=up] (w2.center);
  \draw (w2.center) to [out=right, in=right] (b1.center);
  \draw (b2.center) to (C.center);
  \draw (w1.center) to [out=right, in=up, out looseness=0.6] (D.center);
  \draw (w1.center) to [out=up, in=down] (B.center);
  \draw (b1.center) to [out=down, in=right] (b2.center);
  \draw (b1.center) to [out=left, in=left] (b3.center);
  \draw (b3.center) to [out=right, in=left] (w3.center);
  \draw (w2.center) to [out=left, in=right] (w3.center);
  \end{tikzpicture}
  \end{aligned}
  \,\,= \,\,  \mathrm{d}(A)
  \begin{pic}[string, yscale=1, xscale=\newxscale]
\draw (-0.5,0.25) to (-0.5,1) node [graydot] {} to [out=left, in=right] (-1,2) node [graydot] {} to [out=left, in=right] (-1.5,1.5) node [whitedot] {} to [out=left, in=down] (-2,2) to [out=up, in=left] (-0.75,3) node (a) [whitedot] {} to [out=right, in=right] (-0.5,1);
\draw (a.center) to +(0,0.75);
\end{pic}
\end{equation}
This completes the proof.
\end{proof}

\subsection{Families of unitary oracles}

This pair of complementary observables automatically gives rise to a much larger family of unitaries, one for each self-conjugate comonoid homomorphism onto one of the classical structures in the pair. See equation~\eqref{eq:comonoidhomomorphismselfconjugate} for the definition of the self-conjugacy property. Lemma~\ref{lem:comonoidhomomorphismselfconjugate} demonstrated that in \cat{FHilb}, every comonoid homomorphism of classical structures is self-conjugate.
\begin{defn}[Oracle]
\label{oracle}
In a symmetric monoidal dagger-category, given a dagger-Frobenius comonoid $\blackcomonoid{A}$, a pair of complementary symmetric dagger-Frobenius comonoids \graycomonoid{B} and \whitecomonoid{B}, and a self-conjugate comonoid homomorphism $f : \blackcomonoid{A} \to \graycomonoid{B}$, the \emph{oracle} is defined to be the following endomorphism of $A \otimes B$:
\begin{equation}
\label{eq:oracle}
\sqrt{\ud(A)}
\begin{aligned}
\begin{tikzpicture}[string,yscale=\newyscale,yscale=0.8]
    \node (dot) [blackdot] at (0,1) {};
    \node (f) [morphism, wedge] at (0.7,2) {$f$};
    \node (m) [whitedot] at (1.4,3) {};
\draw (0,0.25)
        node [below] {$A$}
    to (0,1)
    to [out=left, in=south] (-0.7,2)
    to (-0.7,3.75)
        node [above] {$A$};
\draw (0,1)
    to [out=right, in=south] (f.south);
\draw  (f.north)
    to [out=up, in=left] (1.4,3)
    to [out=right, in=up] +(0.7,-1)
    to (2.1,0.25)
        node [below] {$B$};;
\draw (m.center) to +(0,0.75) node [above] {$B$};
\end{tikzpicture}
\end{aligned}
\end{equation}
\end{defn}
\begin{theorem}
\label{thm:familyofunitaries}
Oracles are unitary.
\end{theorem}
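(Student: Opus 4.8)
The plan is to prove unitarity by verifying the two equations $O^\dagger \circ O = \id[A \otimes B]$ and $O \circ O^\dagger = \id[A \otimes B]$ directly in the graphical calculus, reusing the strategy of Theorem~\ref{thm:complementarityunitary}. Write the oracle of~\eqref{eq:oracle} as $O = \sqrt{\ud(A)}\,D$, where $D$ is the diagram assembled from the black comultiplication on $A$, the homomorphism $f \colon \blackcomonoid{A} \to \graycomonoid{B}$, and the white multiplication on $B$. Its adjoint is $O^\dagger = \sqrt{\ud(A)}\,D^\dagger$, where $D^\dagger$ reflects $D$ top-to-bottom, turning the black comultiplication into the black multiplication, the white multiplication into the white comultiplication, and $f$ into $f^\dagger$. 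First I would stack $D^\dagger$ on top of $D$ and look at the $B$-wire in the middle, where a white multiplication is immediately followed by a white comultiplication; the Frobenius law~\eqref{eq:frobenius} for the white algebra rewrites this pair so that the $B$-input runs through a single white multiplication to the $B$-output, while a separate white comultiplication closes the control loop.

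The heart of the argument is to remove the two copies of $f$. After the rewriting above, the control leg of the black comultiplication passes through $f$ into the gray structure, through a white comultiplication, and back through $f^\dagger$; note that, $f$ being a comonoid homomorphism onto \graycomonoid{B}, its adjoint $f^\dagger$ is a monoid homomorphism $\graycomonoid{B} \to \blackcomonoid{A}$. The key move is to apply the self-conjugacy property~\eqref{eq:comonoidhomomorphismselfconjugate}: the subdiagram in which one comultiplication leg is bent through $f$ by the black cup on the $A$-side and the gray cap on the $B$-side is exactly the transpose on the left of~\eqref{eq:comonoidhomomorphismselfconjugate}, and rewriting it as the conjugate on the right lets the lone $f$ pair with the returning $f^\dagger$ and be absorbed into the black and gray Frobenius caps and cups. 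The aim is to reach a diagram in which the control loop is expressed purely through the gray and white comonoids on $B$, with the two $f$-boxes gone and the $A$-control reduced to a black comultiplication closed by a black counit.

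With the $f$-boxes eliminated, the remaining gray--white loop is precisely the one appearing in the final composite of~\eqref{eq:generalizedcnotunitaryproof}, so the complementarity equation~\eqref{eq:complementarity} for the pair \graycomonoid{B}, \whitecomonoid{B} collapses it to the disconnected gray counit and white unit. The $B$-input then passes straight to the $B$-output, and black counitality turns the $A$-control into $\id[A]$, so the composite is $\id[A \otimes B]$. The scalar $\big(\sqrt{\ud(A)}\big)^2 = \ud(A)$ is exactly what is needed to cancel the dimensional factor produced when the self-conjugacy loop on $A$ is closed, as measured by~\eqref{eq:dim}. The reverse composite $O \circ O^\dagger$ is handled by the mirror-image calculation, using the dagger of~\eqref{eq:comonoidhomomorphismselfconjugate} and the alternative form of the complementarity condition recorded after Definition~\ref{def:complementarity}.

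The step I expect to be the main obstacle is the elimination of the two $f$-boxes. By Lemma~\ref{lem:comonoidhomomorphismselfconjugate} the map $f$ is only a classical function between the copyable points and in general has no inverse, so the oracle cannot be obtained from the construction of~\eqref{eq:generalizedcnot} by conjugating with a unitary, and no naturality argument will clear $f$ away. It is precisely the self-conjugacy hypothesis that allows a single $f$ on one comultiplication leg to annihilate against the $f^\dagger$ on the returning leg through the Frobenius cups and caps. Matching the orientations of those cups and caps to the two sides of~\eqref{eq:comonoidhomomorphismselfconjugate}, and checking that the leftover scalar is $\ud(A)$ and not $\ud(B)$, is the delicate bookkeeping on which the whole proof turns.
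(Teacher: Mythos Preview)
Your overall strategy matches the paper's---Frobenius rearrangement on $B$, self-conjugacy to deal with $f^\dagger$, then complementarity---but there is a real gap in how you eliminate the two $f$-boxes. Self-conjugacy~\eqref{eq:comonoidhomomorphismselfconjugate} only says that bending $f$ by the black cup and gray cap gives $f^\dagger$; applying it turns $f^\dagger$ into a second copy of $f$, so after this step you have \emph{two} $f$'s in the diagram, not zero. These cannot ``pair with'' each other and ``be absorbed into the black and gray Frobenius caps and cups'': $f$ is generically non-invertible, and Frobenius structure alone does nothing to remove an arbitrary morphism. The ingredient you are missing is the comonoid homomorphism axiom itself. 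After self-conjugacy and an associativity/Frobenius rearrangement, the two $f$'s sit on the two output legs of the black comultiplication, and comultiplication preservation $(f \otimes f) \circ \tinycomult[blackdot] = \tinycomult[graydot] \circ f$ merges them into a single $f$ followed by a gray comultiplication. Only now does the gray--white complementarity loop of~\eqref{eq:complementarity} appear and collapse. That still leaves one $f$, which is removed by counit preservation $\tinycounit[graydot] \circ f = \tinycounit[blackdot]$, producing the black counit that closes the $A$-control wire via counitality. The paper's proof invokes both homomorphism axioms explicitly at these two points; you note that $f$ is a comonoid homomorphism but never actually use either axiom, and without them the argument does not close.

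A smaller point: your attribution of the scalar is also off. Equation~\eqref{eq:comonoidhomomorphismselfconjugate} is scalar-free and closes no loop on $A$; the dimension factor in front of the oracle is absorbed only at the complementarity step~\eqref{eq:complementarity}, not during the self-conjugacy rewrite.
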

\begin{proof}
To demonstrate that the oracle~\eqref{eq:oracle} is unitary, we must compose it with its adjoint on both sides and show that we get the identity in each case. In one case, we obtain the following, making use of the Frobenius laws, self-conjugacy of $f$, associativity and coassociativity, the fact that $f$ preserves comultiplication, the complementarity condition, the fact that $f$ preserves the counit, and the unit and counit laws:
\tikzset{every picture/.style={scale=0.9,yscale=0.9}}
\begin{align*}
&\ud(A)
\begin{aligned}
\begin{tikzpicture}[yscale=0.6,string,xscale=1]
\node (A) at (0,2) {};
\node (B) at (1.75,0) {};
\node (b1) [blackdot] at (0,1) {};
\node (w1) [whitedot] at (1,-1) {};
\node (w2) [whitedot] at (1,-2) {};
\node (b2) [blackdot] at (0,-4) {};
\node (C) at (0,-5) {};
\node (D) at (1.75,-5) {};
\node (f1) [morphism, wedge] at (0.5,-3) {$f$};
\node (f2) [morphism, wedge, hflip] at (0.5,0) {$f$};
\draw (A.center) to (b1.center);
\draw (b1.center) to [out=right, in=up] (f2.north);
\draw (f2.south) to [out=down, in=left] (w1.center);
\draw (w1.center) to (w2.center);
\draw (w2.center) to [out=left, in=up] (f1.north);
\draw (b2.center) to (C.center);
\draw (w2.center) to [out=right, in=up] (1.5,-3 |- f1.north) to (1.5,-5);
\draw (w1.center) to [out=right, in=down] (1.5,1 |- f2.south) to (1.5,2);
\draw (b1.center) to [out=left, in=up] (-0.5,0 |- f2.north) to (-0.5,0 |- f1.south) to [out=down, in=left] (b2.center);
\draw (f1.south) to [out=down, in=right] (b2.center);
\end{tikzpicture}
\end{aligned}
=\,\,
\ud(A)
\hspace{-3pt}
\begin{aligned}
\begin{tikzpicture}[yscale=0.7,string,xscale=1.2]
\node (f1) [morphism, wedge] at (0.5,-4) {$f$};
\node (f2) [morphism, wedge, hflip] at (-0.5,-2) {$f$};
\node (A) at (-2,0) {};
\node (B) at (0.5,0) {};
\node (w1) [whitedot] at (0.5,-2.0) {};
\node (w2) [whitedot] at (1.0,-3) {};
\node (w3) [whitedot] at (-0.125,-3) {};
\node (b1) [blackdot] at (-1.5,-3) {};
\node (b2) [blackdot] at (0,-5) {};
\node (b3) [blackdot] at (-0.75,-1) {};
\node (C) at (0,-6) {};
\node (D) at (1.5,-6) {};
\draw (A.center) to +(0,-1.5) to [out=down, in=left] (b1.center);
\draw (w1.center) to [out=right, in=up] (w2.center);
\draw (w2.center) to [out=left, in=up] (f1.north);
\draw (f1.south) to [out=down, in=right] (b2.center);
\draw (b2.center) to (C.center);
\draw (w2.center) to [out=right, in=up] (D.center |- f1.north) to (D.center);
\draw (w1.center) to [out=up, in=down] (B.center);
\draw (b1.center) to [out=down, in=left] (b2.center);
\draw (b1.center) to [out=right, in=left] (b3.center);
\draw (f2.south) to [out=down, in=left] (w3.center);
\draw (w3.center) to [out=right, in=left] (w1.center);
\draw (b3.center) to [out=right, in=up] (f2.north);
\end{tikzpicture}
\end{aligned}
=\,\,
\ud(A)
\hspace{-3pt}
\begin{aligned}
\begin{tikzpicture}[yscale=0.7,string,xscale=1.2]
\node (f1) [morphism, wedge] at (0.5,-4) {$f$};
\node (f2) [morphism, wedge] at (-1,-2) {$f$};
\node (A) at (-2,0) {};
\node (B) at (0.68,0) {};
\node (w1) [whitedot] at (0.68,-2.0) {};
\node (w2) [whitedot] at (1.0,-3) {};
\node (w3) [whitedot] at (0.25,-3) {};
\node (b1) [blackdot] at (-1.5,-3) {};
\node (b2) [blackdot] at (0,-5) {};
\node (b3) [graydot] at (-0.5,-1) {};
\node (C) at (0,-6) {};
\node (D) at (1.5,-6) {};
\draw (A.center) to +(0,-1.5) to [out=down, in=left] (b1.center);
\draw (w1.center) to [out=right, in=up] (w2.center);
\draw (w2.center) to [out=left, in=up] (f1.north);
\draw (f1.south) to [out=down, in=right] (b2.center);
\draw (b2.center) to (C.center);
\draw (w2.center) to [out=right, in=up] (D.center |- f1.north) to (D.center);
\draw (w1.center) to [out=up, in=down] (B.center);
\draw (b1.center) to [out=down, in=left] (b2.center);
\draw (w3.center) to [out=left, in=right] (b3.center);
\draw (f2.south) to [out=down, in=right] (b1.center);
\draw (w3.center) to [out=right, in=left] (w1.center);
\draw (b3.center) to [out=left, in=up] (f2.north);
\end{tikzpicture}
\end{aligned}
\\
&
\hspace{2cm}
= \,\,\ud(A)
\begin{aligned}
\begin{tikzpicture}[yscale=0.8,string,xscale=1.2]
\node (f1) [morphism, wedge] at (-0.25,-3) {$f$};
\node (f2) [morphism, wedge] at (1.25,-3) {$f$};
\node (A) at (-1,0) {};
\node (B) at (1.5,0) {};
\node (w1) [whitedot] at (1.5,-1.0) {};
\node (w2) [whitedot] at (1.0,-2) {};
\node (w3) [whitedot] at (0.5,-2.5) {};
\node (b1) [blackdot] at (0.5,-4) {};
\node (b2) [blackdot] at (0,-5) {};
\node (b3) [graydot] at (0,-2) {};
\node (C) at (0,-6) {};
\node (D) at (2.25,-6) {};
\draw (A.center) to [out=down, in=left, in looseness=0.59] (b2.center);
\draw (w1.center) to [out=left, in=up] (w2.center);
\draw (w2.center) to [out=right, in=up] (f2.north);
\draw (b2.center) to (C.center);
\draw (w1.center) to [out=right, in=up, out looseness=0.45] (D.center);
\draw (w1.center) to [out=up, in=down] (B.center);
\draw (b1.center) to [out=down, in=right] (b2.center);
\draw (b1.center) to [out=left, in=down] (f1.south);
\draw (b3.center) to [out=right, in=left] (w3.center);
\draw (w2.center) to [out=left, in=right] (w3.center);
\draw (f1.north) to [out=up, in=left] (b3.center);
\draw (f2.south) to [out=down, in=right] (b1.center);
\end{tikzpicture}
\end{aligned}
= \,\,\ud(A)
\begin{aligned}
\begin{tikzpicture}[yscale=0.8,string,xscale=1.2]
\node (f) [morphism, wedge] at (0.5,-4) {$f$};
\node (A) at (-0.5,0) {};
\node (B) at (1.5,0) {};
\node (w1) [whitedot] at (1.5,-1.0) {};
\node (w2) [whitedot] at (1.0,-2) {};
\node (w3) [whitedot] at (0.5,-2.5) {};
\node (b1) [graydot] at (0.5,-3.25) {};
\node (b2) [blackdot] at (0,-5) {};
\node (b3) [graydot] at (0,-2) {};
\node (C) at (0,-6) {};
\node (D) at (2,-6) {};
\draw (A.center) to +(0,-4) to [out=down, in=left] (b2.center);
\draw (w1.center) to [out=left, in=up] (w2.center);
\draw (w2.center) to [out=right, in=right] (b1.center);
\draw (b2.center) to (C.center);
\draw  (D.center) to +(0,4) to [out=up, in=right] (w1.center);
\draw (w1.center) to [out=up, in=down] (B.center);
\draw (b1.center) to [out=down, in=up] (f.north);
\draw (f.south) to [out=down, in=right] (b2.center);
\draw (b1.center) to [out=left, in=left] (b3.center);
\draw (b3.center) to [out=right, in=left] (w3.center);
\draw (w2.center) to [out=left, in=right] (w3.center);
\end{tikzpicture}
\end{aligned}
\\
&
\hspace{2cm}
=\hspace{5pt}
\begin{aligned}
\begin{tikzpicture}[yscale=0.7,string,xscale=1]
\node (f) [morphism, wedge] at (0.5,-4) {$f$};
\node (A) at (-0.5,0) {};
\node (B) at (1.5,0) {};
\node (w1) [whitedot] at (1.5,-1.0) {};
\node (w2) [whitedot] at (1.0,-2) {};
\node (b1) [graydot] at (0.5,-3) {};
\node (b2) [blackdot] at (0,-5) {};
\node (C) at (0,-6) {};
\node (D) at (2,-6) {};
\draw (A.center) to +(0,-4) to [out=down, in=left] (b2.center);
\draw (w1.center) to [out=left, in=up] (w2.center);
\draw (b2.center) to (C.center);
\draw  (D.center) to +(0,4) to [out=up, in=right] (w1.center);
\draw (w1.center) to [out=up, in=down] (B.center);
\draw (b1.center) to [out=down, in=up] (f.north);
\draw (f.south) to [out=down, in=right] (b2.center);
\end{tikzpicture}
\end{aligned}
\hspace{5pt}=\hspace{5pt}
\begin{aligned}
\begin{tikzpicture}[yscale=0.7,string,xscale=1]
\node (A) at (-0.5,0) {};
\node (B) at (1.5,0) {};
\node (w1) [whitedot] at (1.5,-1.0) {};
\node (w2) [whitedot] at (1.0,-2) {};
\node (b1) [blackdot] at (0.5,-4) {};
\node (b2) [blackdot] at (0,-5) {};
\node (C) at (0,-6) {};
\node (D) at (2,-6) {};
\draw (A.center) to +(0,-4) to [out=down, in=left] (b2.center);
\draw (w1.center) to [out=left, in=up] (w2.center);
\draw (b2.center) to (C.center);
\draw  (D.center) to +(0,4) to [out=up, in=right] (w1.center);
\draw (w1.center) to [out=up, in=down] (B.center);
\draw (b1.center) to [out=down, in=right] (b2.center);
\end{tikzpicture}
\end{aligned}
\hspace{5pt}=\hspace{5pt}
\begin{aligned}
\begin{tikzpicture}[yscale=0.7,string,xscale=1]
\draw (0,0) to (0,6);
\draw (1.5,0) to (1.5,6);
\end{tikzpicture}
\end{aligned}
\end{align*}
There is a similar argument that the other composite also gives the identity. \end{proof}

\section{Identifying group homomorphisms into abelian groups}
\label{sec:algorithm}

\subsection{Introduction}

In this Section we construct a new deterministic quantum algorithm to identify  group homomorphisms.  
\begin{defn}[Group homomorphism identification problem]
Given finite groups $G$ and $A$ where $A$ is abelian, and a blackbox function $f:G\to A$ that is promised to be a group homomorphism, identify the homomorphism $f$.
\end{defn}

\noindent
We will define a quantum algorithm that solves the group homomorphism identification problem with a number of queries equal to the number of simple factors of the abelian group $A$.

For comparison, we can consider the obvious classical algorithm for this problem.
\begin{lemma}
Given finite groups $G$ and $A$, where $A$ is abelian and $G$ has a generating set of order $m$, and a blackbox function $f:G\to A$ that is promised to be a group homomorphism, a classical algorithm can determine $f$ with $m$ oracle queries.
\end{lemma}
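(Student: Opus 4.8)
The plan is to exploit the elementary fact that a group homomorphism is completely pinned down by its values on any generating set. Concretely, I would fix a generating set $\{g_1, \dots, g_m\}$ of $G$ of size $m$, and issue exactly one oracle query for each generator, recording the values $f(g_1), \dots, f(g_m) \in A$. This accounts for the claimed $m$ queries; it then remains to argue that these data suffice to reconstruct $f$ on all of $G$.

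For the reconstruction, I would take an arbitrary $g \in G$ and express it as a word in the generators and their inverses, say $g = g_{i_1}^{\epsilon_1} \cdots g_{i_k}^{\epsilon_k}$ with each $\epsilon_j \in \{+1, -1\}$; such a word exists precisely because the $g_i$ generate $G$. Applying the homomorphism property repeatedly gives $f(g) = f(g_{i_1})^{\epsilon_1} \cdots f(g_{i_k})^{\epsilon_k}$, where I use that $f(g_i^{-1}) = f(g_i)^{-1}$, itself a consequence of $f$ being a homomorphism. Every factor on the right is either a recorded value $f(g_{i_j})$ or its inverse in $A$, so $f(g)$ is determined by the stored query results together with the group operation of $A$.

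The only subtlety worth flagging is well-definedness: the word representing $g$ is far from unique, so a priori different factorizations might seem to yield different values. This is not a genuine obstacle, because $f$ is promised to be an honest homomorphism, so the computed expression $f(g_{i_1})^{\epsilon_1} \cdots f(g_{i_k})^{\epsilon_k}$ necessarily agrees with the true value $f(g)$ regardless of which word is chosen; we are merely evaluating a function we are guaranteed exists, not attempting to define one. I would also remark in passing that the hypothesis that $A$ is abelian plays no role in this classical bound — it is inherited from the ambient problem statement — so the argument in fact establishes that $m$ queries are information-theoretically sufficient to identify any homomorphism out of $G$, whatever the target group.
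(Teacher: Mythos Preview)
Your proposal is correct and follows essentially the same approach as the paper: query the homomorphism on a generating set, then note that these values determine $f$ everywhere. The paper's own proof is a single sentence to this effect, so your version is considerably more explicit (spelling out the word decomposition, the well-definedness point, and the irrelevance of the abelian hypothesis), but the underlying idea is identical.
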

\begin{proof}
Once we have evaluated $f$ classically on the generating set of $G$, we have fully characterized~$f$. 
\end{proof}

\noindent
We are unable to prove optimality in either the quantum or classical case. However, we note that the query complexities of these quantum and classical algorithms depend of different and unrelated parameters of the problem. Instances where the order of the generating set of $G$ is larger than the number of factors in the target group $A$ will demonstrate a quantum advantage. 

In  the simpler case where $G$ is an abelian group this quantum algorithm was previously described by H\o yer \cite{hoyer-conjops}, though his algebraic presentation differs significantly from ours. H\o yer also notes that the algorithm by Bernstein and Vazirani in~\cite{bernstein-qcomplex} is an instance of the abelian group identification problem where $G=\mathbb{Z}_n^n$ and $A=\mathbb{Z}_2$. Independently, Cleve et. al.~\cite{cleve-qAlgRevisited} also presented an algorithm for the abelian case where $G=\mathbb{Z}_2^n$ and $A=\mathbb{Z}_2^m$.


We will proceed using the abstract structure defined earlier, but will now work in the dagger-symmetric monoidal category {\bf FHilb}. Any choice of orthonormal basis for  an object $A$ in {\bf FHilb} endows it with a dagger-Frobenius algebra $(A,\tinymult[blackdot],\tinyunit[blackdot])$, whose copying map $d: A \to A\otimes A$ is defined as the linear extension of $d(|i\rangle)=|i\rangle\otimes|i\rangle$. Any finite group $G$ induces a different dagger-Frobenius algebra on an object $A=\mathbb{C}[G]$, the Hilbert space with orthonormal basis given by the elements $G$, with multiplication given by linear extension of the group multiplication; we represent this structure as~$(A, \tinymult[whitedot], \tinyunit[whitedot])$. These two Frobenius algebras are complementary.

\def\Mat{\mathrm{Mat}}
In the case that $G$ is finite, its representations can be characterized as the homomorphisms \mbox{$G \sxto \rho \mbox{Mat}(n)$}. The homomorphism conditions take the following form~\cite[Section~A.7]{vicary-tqa}:
\begin{calign}
\label{eq:rhocopied}
\begin{aligned}
\begin{tikzpicture}[thick, scale=\licsscale]
\draw (-0.7,-1) node [below] {$G$} to [out=up, in=\swangle] (0,0);
\draw (0.7,-1) node [below] {$G$} to [out=up, in=\seangle] (0,0);
\draw (0,0) to (0,0.75);
\node (m) at (0,0) [whitedot] {};
\node (rho) at (0,0.75) [morphism, wedge, width=0, anchor=south] {$\rho$};
\draw ([xshift=5pt] rho.north) to +(0,0.70);
\draw ([xshift=-5pt] rho.north) to +(0,0.70);
\node at (0,2.25) [anchor=south] {$\Mat(n)$};
\end{tikzpicture}
\end{aligned}
\quad=\quad
\begin{aligned}
\begin{tikzpicture}[thick, scale=\licsscale]
\node (r1) at (0,1.5) [morphism, wedge] {$\rho$};
\node (r2) at (1.5,1.5) [morphism, wedge] {$\rho$};
\draw (0,0) node [below] {$G$} to (r1.south);
\draw (1.5,0) node [below] {$G$} to (r2.south);
\draw ([xshift=5pt] r1.north) to [out=up, in=up] ([xshift=-5pt] r2.north);
\draw ([xshift=-5pt] r1.north) to [out=up, in=down, in looseness=1] (0.55,3.25);
\draw ([xshift=5pt] r2.north) to [out=up, in=down, in looseness=1] (0.95,3.25);
\node [above] at (0.75,3.25) {$\Mat(n)$};
\end{tikzpicture}
\end{aligned}
&
\begin{aligned}
\begin{tikzpicture}[thick, scale=\licsscale]
\draw (0,-0.55) node [whitedot] {} to +(0,1) node (r1) [morphism, wedge, anchor=south] {$\rho$};
\draw ([xshift=-5pt] r1.north) to +(0,1);
\draw ([xshift=5pt] r1.north) to +(0,1);
\node at (0.0,2.25) [above] {$\Mat(n)$};
\node [below, white] at (0,-1) {$G$};
\end{tikzpicture}
\end{aligned}
\quad=\quad
\begin{aligned}
\begin{tikzpicture}[thick, scale=\licsscale]
\draw (0,0) to (0,-1) to [out=down, in=down, looseness=2] (0.5,-1) to (0.5,0);
\node at (0.25,0) [above] {$\Mat(n)$};
\node [below, white] at (0.25,-3.25) {$G$};
\end{tikzpicture}
\end{aligned}
\end{calign}
These will be essential for our proofs below.

\subsection{The algorithm}

The structure of the quantum algorithm that solves the group homomorphism identification problem is given by the topological diagram~\eqref{eq:theAlg} below. Here $\sigma:G\to\mathbb{C}$ is a normalized irreducible representation of $G$, representing the result of the measurement, and $\rho:A\to\mathbb{C}$ is a normalized irreducible representation of $A$. The representation $\rho$ is one-dimensional as $A$ is an abelian group. Physically, we are able to produce the input  state $\rho$ efficiently, using $O(\log n)$ time steps, via the quantum Fourier transform for any finite abelian group~\cite{cleve-parallelQFT}. The measurement  result $\sigma$ arises from  a measurement in the Fourier basis, which can, by a similar procedure for any finite group~\cite{childs-qalgebraic}, also be implemented efficiently.
\begin{align}
\label{eq:theAlg}
\begin{aligned}
\begin{tikzpicture}[string, yscale=1]
    \node (dot) [blackdot] at (0,1) {};
    \node (f) [morphism, wedge] at (0.7,2) {$f$};
    \node (m) [whitedot] at (1.4,3) {};
    \node (topsig) [morphism, fill=white, wedge, anchor=south] at (-0.7,3.6) {$\sigma$};
\draw ([xshift=5pt] topsig.north) to +(0,0.3);
\draw ([xshift=-5pt] topsig.north) to +(0,0.3);     
\draw (0,0.4)
        node [blackdot] {}
        node [anchor=20] {$\frac 1 {\sqrt{|G|}}$}
    to (0,1)
    to [out=left, in=south] (-0.7,2)
    to (topsig.south);   
\draw (0,1)
    to [out=right, in=south] (f.south);
\draw  (f.north)
    to [out=up, in=left] (1.4,3)
    to [out=right, in=up] +(0.7,-1)
    to (2.1,0.4)
        node [morphism, wedge, hflip, anchor=north] {$\rho$};
\draw (m.center) to (1.4,4.4)
        node [above] {};
\draw [thin, lightgray] (-1.25,0.7) to (7.5,0.7);
\draw [thin, lightgray] (-1.25,3.3) to (7.5,3.3);
\node at (3,0) [anchor=west] {Prepare initial states};
\node at (3,2) [anchor=west] {Apply a unitary map};
\node at (3,4) [anchor=west] {Measure the left system};
\node at (-0.7,2) [anchor=east] {$\sqrt{|G|}$};
\end{tikzpicture}
\end{aligned}
\end{align}

We can compare the structure of this algorithm to that of the standard quantum algorithm for the hidden subgroup problem. There, the second system is prepared in a state given by the identity element of the group, corresponding to a uniform linear combination of the irreducible representations.  A later measurement of this second system---which is not a part of the standard hidden subgroup algorithm, but can be done without changing the result of the procedure---would collapse this combination to a classical mixture of these representations. The hidden subgroup algorithm therefore contains an amount of classical nondeterminism in its initial setup. In principle removing this, and selecting the input representation strategically, can only improve performance, and we take advantage of this here.

We analyze the effect of our new algorithm as follows.
\begin{lemma}
The algorithm defined by~\eqref{eq:theAlg} gives output $\sigma$ with probability given by the square norm of~$\sigma\circ f^*\circ\rho^*$.
\end{lemma}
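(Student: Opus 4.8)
The plan is to invoke the Born rule: the probability of obtaining the measurement result $\sigma$ is the square norm of the amplitude, where the amplitude is what one gets by closing up the diagram~\eqref{eq:theAlg} --- composing the two prepared input states at the bottom with the oracle and then with the effect $\sigma$ on the left system at the top. Projecting the left system onto $\sigma$ leaves a (subnormalised) vector on the right system, and the probability is its square norm; so the entire content of the lemma is the diagrammatic claim that this vector is a scalar multiple of the fixed input state, with the scalar equal to the composite $\mathbb{C}\xrightarrow{\rho^*}A\xrightarrow{f^*}G\xrightarrow{\sigma}\mathbb{C}$. Once that diagrammatic identity is established, taking the square norm is immediate, so I would spend the proof manipulating the closed string diagram into $\sigma\circ f^*\circ\rho^*$.

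First I would resolve the left-hand input. The bottom-left state is the black (copying) unit on $G$, which feeds directly into the black comultiplication supplied by the oracle~\eqref{eq:oracle}; by the counit and Frobenius laws this unit-into-comultiplication collapses, leaving a single $G$-strand that is simultaneously read by the effect $\sigma$ and fed into the box $f$. Next I would eliminate the white (group-algebra) multiplication on $A$ that combines the output of $f$ with the input representation: by the representation conditions~\eqref{eq:rhocopied} (in dagger form), the character input $\rho^*$ is an eigenvector for group translation, so the white multiplication against it collapses to a scalar action and pulls $\rho^*$ through until it meets $f$. Finally I would bend the $f$-box: using that $f$ is a comonoid homomorphism together with its self-conjugacy~\eqref{eq:comonoidhomomorphismselfconjugate} (available in \cat{FHilb} by Lemma~\ref{lem:comonoidhomomorphismselfconjugate}), the Frobenius cups and caps let me slide $f$ around so that it appears as its adjoint $f^*$ between $\rho^*$ and the $\sigma$-strand. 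After these moves the diagram should consist of exactly the three boxes $\rho^*$, $f^*$, $\sigma$ composed in series, i.e. the scalar $\sigma\circ f^*\circ\rho^*$.

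The remaining bookkeeping is the scalar prefactors. The explicit factor $\tfrac{1}{\sqrt{|G|}}$ at the bottom black unit, the factor $\sqrt{|G|}$ on the left leg (which is the oracle normalisation $\sqrt{\ud(A)}$ with $\ud(G)=|G|$), and any coefficient produced when the white multiplication collapses against $\rho^*$ must be checked to cancel, so that the normalised amplitude is exactly $\sigma\circ f^*\circ\rho^*$ with no stray constant. The probability is then $\lvert\sigma\circ f^*\circ\rho^*\rvert^2$ by the Born rule, as claimed.

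I expect the main obstacle to be the combined topological step in which $f$ is bent into $f^*$ by the Frobenius structure at the same time as the representation law~\eqref{eq:rhocopied} is used to absorb the white multiplication against the character input: getting the strand orientations, the complex-conjugation conventions for the $\sigma$ and $\rho$ boxes, and the induced scalar factors all to line up consistently is delicate, and it is easy to drop a normalisation or pick up a spurious conjugate. A secondary point needing care is the reading of \emph{square norm} when $\sigma$ is a higher-dimensional irreducible representation landing in $\mathrm{Mat}(n)$, in which case $\sigma\circ f^*\circ\rho^*$ is matrix-valued and its square norm must be interpreted as the Hilbert--Schmidt norm.
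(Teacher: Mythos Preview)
Your proposal is correct and follows essentially the same route as the paper: the paper uses the copyable-point property of the character $\rho$ under the white (group) multiplication to split off a free copy of the input state on the right wire, then collapses the black unit into the comultiplication and bends $\rho\circ f$ around the Frobenius structure to obtain $\sigma\circ f^*\circ\rho^*$ on the left, exactly your three moves in slightly different order. The paper is terser---it outsources the bending rewrite to an external reference rather than invoking self-conjugacy of $f$ explicitly---but in \cat{FHilb} these coincide, so the difference is presentational.
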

\begin{proof}
Using that $\rho$ is a group homomorphism and simple diagrammatic rewrites defined in~\cite[Section~A.9]{vicary-tqa},
 we show the following, making use of the fact that representations are copyable points for group multiplication:
\begin{align}
\label{simplifyAlg}
\begin{aligned}
\begin{tikzpicture}[string]
\draw [use as bounding box, draw=none] (-0.3,0.6) rectangle +(3.45,3.7);
    \node (f) [morphism, wedge] at (1.25,2) {$f$};
    \node (s) [morphism, wedge] at (0,3.5) {$\sigma$};
\node (r) at (2.5,0.75) [morphism, wedge, hflip] {$\rho$};
\draw (f.south) to [out=down, in=right] +(-0.625,-0.5) node (b) [blackdot] {} to [out=left, in=down] +(-0.625, 0.5) to (s.south);
\draw (b.center) to +(0,-0.5) node [blackdot] {};
\draw ([xshift=4pt] s.north) to +(0,0.5);
\draw ([xshift=-4pt] s.north) to +(0,0.5);
\draw (f.north) to [out=up, in=left] +(0.625,0.5) node (w) [whitedot] {} to [out=right, in=up] +(0.625,-0.5) to (r.north);
\draw (w.center) to +(0,1.5);
\end{tikzpicture}
\end{aligned}
\quad=\quad
\begin{aligned}
\begin{tikzpicture}[string]
\draw [use as bounding box, draw=none] (-0.3,0.6) rectangle +(3.45,3.7);
\node (s) [morphism, wedge] at (0,3.5) {$\sigma$};
\node (f) [morphism, wedge] at (1.25,2) {$f$}; 
\node (r) at (1.25, 2.75) [morphism, wedge] {$\rho$};
\node (r2) at (2.5,3.5) [morphism, wedge, hflip] {$\rho$};
\draw (r2.north) to +(0,0.5);
\draw ([xshift=4pt] s.north) to +(0,0.5);
\draw ([xshift=-4pt] s.north) to +(0,0.5);
\draw (r.south) to (f.north);
\draw (f.south) to [out=down, in=right] +(-0.625,-0.5) node (b) [blackdot] {} to [out=left, in=down] +(-0.625, 0.5) to (s.south);
\draw (b.center) to +(0,-0.5) node [blackdot] {};
\end{tikzpicture}
\end{aligned}
\quad=\quad
\begin{aligned}
\begin{tikzpicture}[string]
\draw [use as bounding box, draw=none] (-0.66,0.6) rectangle +(2.75,3.7);
\node (r) [morphism, wedge, hflip, vflip] at (0,2) {$\rho$};
\node (s) [morphism, wedge] at (0,3.5) {$\sigma$};
\node (f) [morphism, wedge, hflip, vflip] at (0,2.75) {$f$};
\node (r2) at (1.5,3.5) [morphism, wedge, hflip] {$\rho$};
\draw ([xshift=4pt] s.north) to +(0,0.5);
\draw ([xshift=-4pt] s.north) to +(0,0.5);
\draw (s.south) to (f.north);
\draw (f.south) to (r.north);
\draw (r2.north) to +(0,0.5);
\end{tikzpicture}
\end{aligned}
\end{align}
The left hand system is thus in the state $\sigma\circ f^*\circ\rho^*$, and using the Born rule, the squared norm of this state gives the probability of this experimental outcome.
\end{proof}

\begin{lemma}\label{lem:irrep}
The composite $\rho\circ f$ is an irreducible representation of $G$.
\end{lemma}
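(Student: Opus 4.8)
The plan is to separate the statement into two independent claims: that $\rho \circ f$ is a representation of $G$, and that it is irreducible. The first is purely a composition of homomorphisms, and the second will fall out of a dimension count, since $A$ is abelian.

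For the representation part, recall from~\eqref{eq:rhocopied} that being a representation of a group means being a homomorphism from the group-algebra (white) Frobenius multiplication to matrix multiplication, together with preservation of the unit. I would verify these two conditions for $\rho \circ f$ on $G$ directly in the graphical calculus. The essential input is that $f\colon G \to A$, being a group homomorphism, is a monoid homomorphism for the white structures: as the linear extension of a set map with $f(g_1 g_2) = f(g_1) f(g_2)$ and $f(e_G) = e_A$, it satisfies that the white multiplication on $G$ followed by $f$ equals $f \otimes f$ followed by the white multiplication on $A$, and likewise it sends the white unit of $G$ to that of $A$. Stacking these identities beneath the corresponding conditions~\eqref{eq:rhocopied} for $\rho$ on $A$ yields exactly the two conditions~\eqref{eq:rhocopied} for $\rho \circ f$ on $G$; equivalently, $\rho \circ f$ is a composite of the monoid homomorphisms $f$ and $\rho$, hence itself a monoid homomorphism, i.e.\ a representation of $G$.

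Irreducibility is then immediate. The statement of the algorithm fixes $\rho$ to be a normalized irreducible representation of the abelian group $A$, so $\rho$ is one-dimensional, valued in $\mathrm{Mat}(1) = \mathbb{C}$. Hence $\rho \circ f \colon G \to \mathbb{C}$ is a one-dimensional representation, and it is nonzero since $(\rho \circ f)(e_G) = 1$. A one-dimensional representation has no proper nonzero invariant subspace, so it is automatically irreducible.

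The only real subtlety, and the point I would take care over, is that $f$ must be used here as a homomorphism of the white group-algebra structures, whereas earlier it entered the construction as a comonoid homomorphism for the black copying structures. Once one records that a group homomorphism between group algebras automatically preserves the group multiplication and its unit, the first part reduces to a two-step vertical composition in the diagrammatic calculus, and the irreducibility is a one-line observation about the dimension of $\rho$.
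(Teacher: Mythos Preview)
Your proposal is correct and follows essentially the same approach as the paper: both argue that $\rho\circ f$ is a representation because it is a composite of homomorphisms, observe that it is one-dimensional since $\rho$ is (as $A$ is abelian), and conclude irreducibility from the fact that every one-dimensional representation is irreducible. Your version is more detailed, particularly in spelling out the graphical-calculus verification and the distinction between the white and black structures on $f$, but the underlying argument is the same.
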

\begin{proof}
The map $f$ is a homomorphism, so $\rho\circ f:G\to\mathbb{C}$ is a one-dimensional representation of $G$. All one-dimensional representations are irreducible, so $\rho\circ f$ is an irreducible representation.
\end{proof}


\begin{lemma}
\label{lem:equaliso}
One-dimensional representations are equivalent only if they are equal.
\end{lemma}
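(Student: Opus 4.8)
The plan is to unwind the definition of equivalence for representations and exploit the commutativity of the scalars. Recall that two representations $\rho, \rho' : G \to \Mat(n)$ are \emph{equivalent} precisely when there is an invertible intertwiner $T \in \Mat(n)$ satisfying $T\,\rho(g) = \rho'(g)\,T$ for all $g \in G$. First I would specialize this to the one-dimensional case $n = 1$, where $\Mat(1) = \C$ and the intertwiner $T$ is therefore nothing more than a nonzero scalar $\lambda \in \C$.

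The key step is then immediate: the intertwining condition reads $\lambda\,\rho(g) = \rho'(g)\,\lambda$ for every $g \in G$, and since multiplication in $\C$ is commutative and $\lambda \neq 0$, we may cancel $\lambda$ to conclude $\rho(g) = \rho'(g)$ for all $g$. Hence the two representations coincide, which is exactly the claim. In the graphical language this is the observation that conjugating a scalar-valued morphism by an invertible scalar leaves it unchanged, so no diagrammatic manipulation beyond this is needed.

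I expect no serious obstacle here, since the entire content is that conjugation by an invertible $1 \times 1$ matrix acts trivially because the underlying field is commutative. The only point requiring a moment's care is to fix the convention for ``equivalent'': whether one demands a general invertible intertwiner or a unitary one. In either reading the argument is unchanged, since a one-dimensional unitary intertwiner is merely a phase, and phases also commute with the scalars $\rho(g)$, so the conclusion $\rho = \rho'$ follows identically.
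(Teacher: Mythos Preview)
Your argument is correct and is essentially identical to the paper's: both take an invertible intertwiner between one-dimensional representations, observe it is a nonzero scalar, and cancel it using commutativity of $\C$ to conclude $\rho = \rho'$. Your version is slightly more explicit about invertibility and the unitary-versus-invertible convention, but the underlying idea is the same.
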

\begin{proof}
Let $\rho_1,\rho_2:G\to \mathbb{C}$ be irreducible representations of $G$. If they are isomorphic, then there exists a linear map $\mathcal{L}:\mathbb{C}\to\mathbb{C}$, i.e. some complex number, such that $\forall g\in G$
$$\mathcal{L}\rho_1(g) = \rho_2(g)\mathcal{L}.$$
Hence we see that $\forall g\in G$, $\rho_1(g) = \rho_2(g)$.
\end{proof}

\begin{theorem}[Structure theorem for finite abelian groups]
\label{thm:structure}
Every finite abelian group is isomorphic to a direct product of cyclic groups of prime power order.
\end{theorem}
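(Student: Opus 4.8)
The plan is to prove the result in two stages: first decompose an arbitrary finite abelian group $A$ (written additively) into its primary components, one for each prime dividing $|A|$, and then decompose each primary component into cyclic factors. One could instead invoke the general structure theorem for finitely generated modules over a principal ideal domain, viewing $A$ as a $\mathbb{Z}$-module, but the elementary two-stage argument is lighter and self-contained, so I would present that. Both stages proceed by induction on the order $|A|$, with the trivial group as base case.

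For the primary decomposition, write $|A| = p_1^{a_1}\cdots p_k^{a_k}$ with the $p_i$ distinct primes, and let $A_{p_i} = \{x \in A : p_i^{a_i}x = 0\}$ be the $p_i$-primary subgroup. Since the integers $|A|/p_i^{a_i}$ are collectively coprime, Bézout's identity gives $1 = \sum_i c_i\,(|A|/p_i^{a_i})$ in $\mathbb{Z}$; sending $x$ to the tuple of its components $c_i(|A|/p_i^{a_i})\,x$ realises $A$ as the internal direct product $A \cong A_{p_1}\times\cdots\times A_{p_k}$. This reduces the theorem to the case where $A$ is a $p$-group.

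The heart of the argument is therefore the $p$-group case, which I would handle via the splitting lemma: if $a \in A$ has maximal order $p^m$, then $\langle a\rangle$ is a direct summand, $A \cong \langle a\rangle \times B$ for some subgroup $B$. Granting this, $B$ is a $p$-group of strictly smaller order, so by the induction hypothesis it is a product of cyclic groups of prime-power order; appending the factor $\langle a\rangle \cong \mathbb{Z}/p^m\mathbb{Z}$ completes the decomposition.

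I expect this splitting lemma to be the main obstacle. The crucial step is to produce an element $b$ of order exactly $p$ with $\langle a\rangle \cap \langle b\rangle = 0$: one takes an element $c \notin \langle a\rangle$ of minimal order, observes that $pc \in \langle a\rangle$ by minimality, writes $pc = ka$, and uses the maximality of $\operatorname{ord}(a)=p^m$ to force $p \mid k$; subtracting the corresponding multiple of $a$ from $c$ then yields the desired $b$. One passes to the quotient $A/\langle b\rangle$, in which the image $\bar a$ retains its maximal order, applies the inductive hypothesis to split off $\langle\bar a\rangle$ with some complement $\bar B$, and pulls $\bar B$ back along the quotient map to a subgroup $B$ of $A$; a short check using $\langle a\rangle\cap\langle b\rangle = 0$ verifies that $A = \langle a\rangle \oplus B$. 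The delicate point is precisely the order bookkeeping establishing $p \mid k$, which is exactly where maximality of $\operatorname{ord}(a)$ is indispensable: without it the intersection $\langle a\rangle \cap \langle b\rangle$ need not be trivial and the induction collapses. Once the lemma is secured, the two inductions assemble into the full statement.
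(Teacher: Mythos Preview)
Your proof sketch is correct and follows one of the standard elementary routes to the structure theorem: primary decomposition via B\'ezout, then the splitting lemma for $p$-groups proved by induction on $|A|$, peeling off a maximal-order cyclic summand at each step. The order bookkeeping you flag as the delicate point (forcing $p\mid k$ from $p^m\mid p^{s-1}k$ with $s\le m$) does go through exactly as you indicate.

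By contrast, the paper does not give a proof at all: it simply cites Artin's \emph{Algebra}, Theorem~6.4, as this is a standard textbook result being used as a tool rather than something the paper aims to establish. So your proposal is not so much a different approach as an actual proof where the paper opts for a reference. What you gain is self-containment; what the paper gains is brevity appropriate to a result that is entirely peripheral to its contribution. If you were writing this paper, a one-line citation would be the right choice; if you were asked to supply a proof in an exam or expository setting, your outline is exactly what is wanted.
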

\begin{proof}
See~\cite[Theorem 6.4]{artin-algebra} for a proof of this standard result.
\end{proof}

\begin{theorem}\label{rightCyclic}
For a finite group $G$ and cyclic group of prime power order $\mathbb{Z}_{p^n}$, the algorithm~\eqref{eq:theAlg} identifies a group homomorphism $f:G\to \mathbb{Z}_{p^n}$ in a single query.
\end{theorem}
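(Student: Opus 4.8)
The plan is to exploit the preceding lemma, which tells us that the algorithm \eqref{eq:theAlg} returns the measurement outcome $\sigma$ with probability equal to the square norm of $\sigma \circ f^* \circ \rho^*$, and to use the freedom in the choice of the input representation $\rho$ to make this distribution a point mass. First I would choose $\rho : \mathbb{Z}_{p^n} \to \mathbb{C}$ to be a \emph{faithful} one-dimensional representation: since $\mathbb{Z}_{p^n}$ is cyclic of order $p^n$, sending a chosen generator to a primitive $p^n$-th root of unity $e^{2\pi i / p^n}$ is such a representation, and its image is the full group of $p^n$-th roots of unity, so $\rho$ is injective as a group homomorphism.

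Next I would show that the measurement is deterministic. By Lemma~\ref{lem:irrep} the composite $\rho \circ f$ is a one-dimensional, hence irreducible, representation of $G$, so it is itself one of the Fourier basis elements over which the outcome $\sigma$ ranges. Reading the amplitude $\sigma \circ f^* \circ \rho^*$ as the representation-theoretic overlap of the irreducible $\sigma$ with $\rho \circ f$, the orthogonality relations for irreducible representations force this amplitude to vanish unless $\sigma$ is equivalent to $\rho \circ f$. By Lemma~\ref{lem:equaliso}, equivalence of one-dimensional representations is equality, so the only outcome of nonzero probability is $\sigma = \rho \circ f$; since the measurement must return some outcome, this one occurs with certainty.

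Finally I would recover $f$ from the single measured representation. Because $\rho$ is faithful it is injective, so the equation $\sigma = \rho \circ f$ can be solved pointwise: for each $g \in G$ we set $f(g) = \rho^{-1}(\sigma(g))$, where $\rho^{-1}$ denotes the inverse of $\rho$ on its image. This determines $f$ completely, and the procedure has invoked the oracle exactly once, giving identification in a single query.

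The step I expect to be the main obstacle is the middle one: justifying carefully that the amplitude $\sigma \circ f^* \circ \rho^*$ is, up to normalization, the inner product of the representations $\sigma$ and $\rho \circ f$, keeping track of the adjoints and conjugates introduced by $f^*$ and $\rho^*$ and of the normalization conventions on $\sigma$ and $\rho$, so that the orthogonality relations can legitimately collapse the distribution to a point mass. Once that identification is secure, the faithfulness of $\rho$ together with Lemmas~\ref{lem:irrep} and~\ref{lem:equaliso} makes the remainder routine.
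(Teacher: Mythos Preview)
Your proposal is correct and follows essentially the same line as the paper: choose $\rho$ to be a faithful one-dimensional representation of $\mathbb{Z}_{p^n}$, observe that the measurement in the representation basis returns $\rho\circ f$ (up to equivalence, hence equality by Lemma~\ref{lem:equaliso}), and use faithfulness of $\rho$ to recover $f$. The paper's proof is terser on the determinism step---it simply asserts that measuring the state $(\rho\circ f)^*$ in the representation basis yields $\rho\circ f$ up to isomorphism---whereas you spell out the orthogonality argument; your flagged ``main obstacle'' is exactly the point the paper takes for granted, so there is no real gap between the two arguments.
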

\begin{proof}
Choose the input representation $\rho$ to be the fundamental representation of $\mathbb{Z}_{p^n}$. This representation is faithful.  This means exactly that 
\[ \rho\circ f = \rho\circ f' \qquad \Leftrightarrow \qquad f=f'. \]
Thus $\rho\circ f$ and $\rho\circ f'$ are different irreducible representations if and only if$f$ and $f'$ are different group homomorphisms.  The single measurement on the state $(\rho\circ f)^*$ is performed by the algorithm in the representation basis of $G$, allowing us to determine $\rho\circ f$ up to isomorphism. Due to Lemma~\ref{lem:equaliso} we know that each equivalence class contains only one representative, and thus we can determine $f$ with a single query.
\end{proof}

\begin{theorem}\label{thm:intoAbThm}
For any two finite groups $G$ and $A$, where $A$ is abelian with $n$ simple factors, the quantum algorithm~\eqref{eq:theAlg} can identify a group homomorphism $f:G \to A$ with $n$ oracle queries.
\end{theorem}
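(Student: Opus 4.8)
The plan is to reduce the general problem to the cyclic prime-power case already settled by Theorem~\ref{rightCyclic}, using the structure theorem for finite abelian groups. First I would invoke Theorem~\ref{thm:structure} to fix an isomorphism $A \cong \mathbb{Z}_{p_1^{k_1}} \times \cdots \times \mathbb{Z}_{p_n^{k_n}}$, exhibiting $A$ as a product of its $n$ simple factors, and write $\pi_i : A \to \mathbb{Z}_{p_i^{k_i}}$ for the associated projection homomorphisms. The key structural observation is that a homomorphism $f : G \to A$ is completely determined by the tuple of its components $f_i := \pi_i \circ f : G \to \mathbb{Z}_{p_i^{k_i}}$, since the projections are jointly monic; conversely any such tuple assembles into a single homomorphism into the product via the universal property. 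Thus identifying $f$ is equivalent to identifying each $f_i$ separately.

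Next, for each factor $i$ I would run the algorithm~\eqref{eq:theAlg} once, choosing as its input representation $\rho = \rho_i \circ \pi_i$, where $\rho_i$ is the fundamental, hence faithful, representation of $\mathbb{Z}_{p_i^{k_i}}$. Since $A$ is abelian this composite is a one-dimensional irreducible representation of $A$, so it is a legitimate input. By the simplification shown in~\eqref{simplifyAlg} the single query then determines $\rho \circ f = \rho_i \circ \pi_i \circ f = \rho_i \circ f_i$ up to isomorphism, and Lemma~\ref{lem:equaliso} upgrades this to equality. Faithfulness of $\rho_i$ — exactly the property exploited in Theorem~\ref{rightCyclic} — gives $\rho_i \circ f_i = \rho_i \circ f_i'$ if and only if $f_i = f_i'$, so this one query pins down the component $f_i$ exactly.

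Carrying out this procedure once per factor uses $n$ oracle queries in total and recovers all of $f_1, \dots, f_n$, hence $f$ itself by the reconstruction of the first paragraph. The main obstacle I anticipate is not any hard analysis but the bookkeeping needed to justify that a query with input $\rho_i \circ \pi_i$ genuinely isolates the single component $f_i$: one must verify that precomposing the faithful factor representation with the projection yields a valid one-dimensional irreducible input for the oracle, and that the representation of $G$ read off by the measurement is precisely $\rho_i \circ f_i$ rather than something mixing the factors. Once this is in place, the theorem follows by applying Theorem~\ref{rightCyclic} to each simple factor in turn.
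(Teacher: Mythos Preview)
Your proposal is correct and follows essentially the same strategy as the paper: decompose $A$ via the structure theorem, project onto each simple factor, and invoke Theorem~\ref{rightCyclic} on each component. The paper phrases this as a two-step induction ($A = H_1 \times H_2$) rather than your direct $n$-fold decomposition, but the content is identical. One point on which your write-up is actually sharper than the paper's: you make explicit that the $i$th query uses the \emph{same} oracle for $f$ with input representation $\rho_i \circ \pi_i$ of $A$, so that the measured character is $\rho_i \circ f_i$; the paper's inductive step simply says ``run subroutines of the algorithm'' on $p_k \circ f$, leaving implicit how the original oracle for $f$ realises these subroutines.
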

\begin{proof}
We prove the result by induction. 
\newline\newline
\noindent{\bf Base case.} When $A=\mathbb{Z}_{p^n}$ is simple, then by Theorem~\ref{rightCyclic} we can identify the homomorphism with a single query.
\newline\newline
\noindent{\bf Inductive step.} If $A$ is not simple, then we must have $A=H_1\times H_2$ by Theorem~\ref{thm:structure}, where the following hold:
\begin{enumerate}

\item The product $\times$ is the direct product whose projectors ($p_1,p_2$) are homomorphisms.

\item  $H_1$ and $H_2$ are groups with $n_1$ and $n_2$ factors respectively such  that the theorem holds, i.e. homomorphisms of the type $f_{1}:G\to H_1$ and $f_{2}:G\to H_2$  can be identified in $n_1$ and $n_2$ queries respectively.

\end{enumerate} 
Since $p_1\circ f$ and $p_2\circ f$ are homomorphisms, we can run subroutines of the algorithm to determine them. Hence we recover $f$ as
\begin{align*}
f(x) = ( (p_1\circ f)(x),(p_2\circ f)(x) ).
\end{align*}
The first subroutine will require $n_1$ queries and the second will require $n_2$ queries, so the total number of queries will be $n_1+n_2$, which is the number of factors of $H_1\times H_2$.
\end{proof}

\ignore{In practice we run the algorithm once on each $k$-th factor of $A$ to determine homomorphisms 
\[ p_k\circ f = f_k:G\to H_k \]
We then know $f$ as 
\begin{align*}
f(n_0,n_1,...n_{n-1}) = \left(f_0(n_0), f_1(n_1),..., f_{n-1}(n_{n-1})\right)
\end{align*}
we know that each isomorphic class contains only one representation}

\subsection{Extension to the non-abelian case}
We now consider the more general case where the target group $A$ is non-abelian. We do not know how to extend the algorithm described above to this case.  Nevertheless, it is instructive to analyze this scenario in our graphical approach. 

Irreducible representations of a non-abelian group $A$ are not necessarily one dimensional, though we are still able to compute them via the Fourier transform efficiently \cite{childs-qalgebraic}. In this case the algorithm has the following structure, where $\psi$ represents the initial state of the right-hand system in the representation space:
\begin{equation}
\label{eq:NonAbAlg}
\begin{aligned}
\begin{tikzpicture}[string]
\draw [use as bounding box, draw=none] (-0.3,-0.3) rectangle +(3.45,4.55);
\node (f) [morphism, wedge] at (1.25,2) {$f$};
\node (topsig) [morphism, wedge] at (0,3.5) {$\sigma$};
\node (r) at (2.5,0.75) [morphism, wedge, hflip] {$\rho$};
\draw (f.south) to [out=down, in=right] +(-0.625,-0.5) node (b) [blackdot] {} to [out=left, in=down] +(-0.625, 0.5) to (s.south);
\draw (b.center) to +(0,-0.5) node [blackdot] {} node [anchor=east] {$\frac {1} {\sqrt{G}}$};
\draw ([xshift=4pt] s.north) to +(0,0.5);
\draw ([xshift=-4pt] s.north) to +(0,0.5);
\draw (f.north) to [out=up, in=left] +(0.625,0.5) node (w) [whitedot] {} to [out=right, in=up] +(0.625,-0.5) to (r.north);
\draw (w.center) to +(0,1.5);
\draw ([xshift=4pt] r.south) to +(0,-0.3);
\draw ([xshift=-4pt] r.south) to +(0,-0.3);
\node [morphism, wedge, anchor=north] at ([yshift=-0.3cm] r.south) {$\psi$};
\end{tikzpicture}
\end{aligned}
\quad=\quad
\begin{aligned}
\begin{tikzpicture}[string]
\draw [use as bounding box, draw=none] (-0.3,-0.3) rectangle +(3.45,4.55);
\node (s) [morphism, wedge] at (0,3.5) {$\sigma$};
\node (f) [morphism, wedge] at (1.25,1.5) {$f$}; 
\node (r) at (1.25, 2.25) [morphism, wedge] {$\rho$};
\node (r2) at (2.5,3.5) [morphism, wedge, hflip] {$\rho$};
\draw (r2.north) to +(0,0.5);
\draw ([xshift=4pt] s.north) to +(0,0.5);
\draw ([xshift=-4pt] s.north) to +(0,0.5);
\draw (r.south) to (f.north);
\draw (f.south) to [out=down, in=right] +(-0.625,-0.5) node (b) [blackdot] {} to [out=left, in=down] +(-0.625, 0.5) to (s.south);
\draw (b.center) to +(0,-0.5) node [blackdot] {};
\node (psi) [morphism, wedge, anchor=north] at (2.5,0.2) {$\psi$};
\draw ([xshift=4pt] psi.north) to ([xshift=4pt] r2.south);
\draw ([xshift=-4pt] psi.north) to ([xshift=-4pt] r.north -| r2.south) to [out=up, in=up] ([xshift=4pt] r.north);
\draw ([xshift=-4pt] r.north) to [out looseness=1.3, out=up, in=down, in looseness=0.8] ([xshift=-4pt] r2.south);
\end{tikzpicture}
\end{aligned}
\end{equation}
We notice two additional features in this case. First, it is clear that the left and right systems are no longer in a product state at the end of the protocol, as they were in the final diagram of \eqref{simplifyAlg}. Second, we now have an additional choice when preparing the input representation $\rho$; in order to construct a state from a representation $\rho$ we also must choose the state $\psi$.

While this provides a clear description of the algorithm in this more general setting, it is not clear that it would identify homomorphisms into non-abelian groups. Complications include the lack of a structure theorem that satisfies the conditions for Theorem~\ref{thm:intoAbThm}, and that Lemma~\ref{lem:irrep} no longer applies.  In this setting it may be useful to make the problem easier by restricting to the identification of homomorphisms up to \emph{natural isomorphism}, i.e. where two homomorphisms $f_1,f_2:G\to H$ are considered equivalent when there exists some $\eta\in H$ such that, for all $g\in G$, we have $\eta f_1(g) \eta^{-1} = f_2(g)$.

\section{Application to signal-flow calculus}
\label{sec:signalflow}

\subsection{Introduction}
\label{sec:signalintroduction}

Signal-flow diagrams are a notation in electrical engineering that describe the flow of information in electrical circuits, including rich phenomena such as feedback. Various authors~\cite{fong-transfer, baezerbele, sobocinski}  have have developed a categorical approach to modelling signal-flow diagrams, based on a category of linear relations on vector spaces over a field $k$. We show in this Section that unitary oracles exist in their setup, in the sense of our Definition~\ref{oracle}, and discuss the consequences of this. 

We begin with a brief introduction to the theory, following the terminology of~\cite{baezerbele}.
\def\sto{\rightsquigarrow}
\begin{defn}
The category $\cat{FinRel}_k$ of \textit{linear relations} is defined in the following way, for any field~$k$:
\begin{itemize}
\item \textbf{Objects} are finite dimensional $k$-vector spaces
\item A \textbf{morphism} $f:V \sto W$ is a \textit{linear relation}, defined as a subspace  $S_f \hookrightarrow V \oplus W$
\item \textbf{Composition} of linear relations $f:U \sto V$ and $g: V \sto W$ is defined as the following subspace of $U \oplus W$:
\begin{equation}
\{ (u,w) | \exists v \in V \text{ with } (u,v) \in S_f \text{ and } (v,w) \in S_g \}
\end{equation}
It can be verified that this defines a linear subspace of $U \oplus W$.
\end{itemize}
Note that a linear relation is in particular an ordinary relation, and that composition of linear relations is the same as for ordinary relations. The category $\cat{FinRel}_k$ can be given a monoidal structure in a natural way, using the direct sum of vector spaces.
\end{defn}

For every linear relation, we can define a converse as follows.
\begin{defn}
\label{def:converse}
Given a linear relation $f: U \sto V$ defined as the subspace $S_f \hookrightarrow U \oplus V$, its \emph{converse} is the linear relation $f ^\dag : V \sto U$ defined as the subspace $S_f \hookrightarrow U \oplus V \sxto{\text{swap}} V \oplus U$.
\end{defn}
\noindent
This makes $\cat{FinRel}_k$ into a monoidal dagger-category. Following the usual convention~\cite{selinger}, we depict the dagger of a linear relation as the original morphism flipped about a horizontal axis.

Certain canonical linear relations play an important role in the theory. We define them here, along with the graphical symbol we will use to denote them.
\def\br{\text{\textit{\textbf{r}}}}
\begin{defn}
\label{defn:basicrelations}
The \textit{addition}, \textit{zero}, \textit{copying}, \textit{deletion} and \textit{multiplier} linear relations are defined as follows, where the definitions in the last line are valid for all $a,b \in k$, and where the multiplier relation takes a parameter given by some $r \in k$:
\begin{equation}
\begin{array}{c@{\qquad}c@{\qquad}c@{\qquad}c@{\qquad}c}
\begin{aligned}
\begin{tikzpicture}[string]
\node (n) [uptriangle] at (0,0) {};
\draw (0,1) to (0,0);
\draw [shorten >=-5pt] (-0.7,-1) to [out=up, in=-140] (n.corner 2);
\draw [shorten >=-5pt] (0.7,-1) to [out=up, in=-40] (n.corner 3);
\end{tikzpicture}
\end{aligned}
&
\begin{aligned}
\begin{tikzpicture}[string]
\draw [white] (0,-1) to (0,1);
\node [circle, draw, fill=black, minimum width=10pt] at (0,0) {};
\draw (0,0) to (0,1);
\end{tikzpicture}
\end{aligned}
&
\begin{aligned}
\begin{tikzpicture}[string]
\node (n) [downtriangle, fill=white] at (0,0) {};
\draw (0,-1) to (0,0);
\draw [shorten >=-5pt] (-0.7,1) to [out=down, in=140] (n.corner 3);
\draw [shorten >=-5pt] (0.7,1) to [out=down, in=40] (n.corner 2);
\node (n) [downtriangle, fill=white] at (0,0) {};
\end{tikzpicture}
\end{aligned}
&
\begin{aligned}
\begin{tikzpicture}[string]
\draw [white] (0,-1) to (0,1);
\draw (0,0) to (0,-1);
\node [circle, draw, fill=white, minimum width=10pt] at (0,0) {};
\end{tikzpicture}
\end{aligned}
&
\begin{aligned}
\begin{tikzpicture}[string]
\draw (0,-1) to (0,1);
\node [circle, draw, inner sep=0pt, minimum width=13pt, fill=white] at (0,0) {\br};
\end{tikzpicture}
\end{aligned}
\\
\text{Addition}
&
\text{Zero}
&
\text{Copying}
&
\text{Deletion}
&
\text{Multiplier}
\\
\blacktriangle : k \oplus k \sto k
& \newmoon : \{0\} \sto k
& \nabla : k \sto k \oplus k
& \ocircle: k \sto \{0\}
& \br:k \sto k
\\
(a,b,a+b) \in \blacktriangle
& (0,0) \in \newmoon
& (a,a,a) \in \nabla
& (a,0) \in \ocircle
& (a,ra) \in \br
\end{array}
\end{equation}
\end{defn}

\noindent
They use their theory to model resistors in electrical circuits, using the following network:
\begin{equation}
\label{eq:resistor}
\begin{aligned}
\begin{tikzpicture}[string, yscale=1, xscale=1]
\node (black) [uptriangle] at (0,1.2) {};
\node (white) [downtriangle, fill=white] at (-1,-0.2) {};
\draw [shorten <=-1pt, shorten >=-1pt] (black.corner 2) to [out=-150, in=30] (white.corner 2);
\draw [shorten >=-1pt] (1,-1.2) node [below] {$v\vphantom{i}$} to [out=up, in=-30, out looseness=0.5] (black.corner 3);
\draw [shorten <=-1pt] (white.corner 1) to (-1,-1.2) node [below] {$i$};
\draw [shorten >=-1pt] (-2,2.2) node [above] {$i$} to [out=down, in=150, out looseness=0.5] (white.corner 3);
\draw [shorten <=-1pt] (black.corner 1) to (0,2.2) node [above] {$v+ir$};
\node (r) [circle, draw, inner sep=0pt, minimum width=13pt, fill=white] at (-0.5,0.5) {\br};
\end{tikzpicture}
\end{aligned}
\end{equation}
The left-hand wire represents the current variable, and the right-hand wire represents the voltage variable. The initial current-voltage pair $(i,v)$ is mapped to the output current-voltage pair $(i,v+ir)$. This respects the usual law for resistors in electrical circuits, whereby if $\delta v$ is the change in voltage over a resistor, $i$ is the current through the resistor, and the value of the resistance is $r$, then $\delta v = i r$.

It has been recognized in~\cite{baezerbele} that the linear relations given in Definition~\ref{defn:basicrelations} satisfy many interesting relationships, which we summarize here without proof:
\begin{lemma}
\label{lem:initialproperties}
In $\cat{FinRel}_k$, the following relationships hold between the addition, zero, copying, deletion and multiplier linear relations:
\begin{enumerate}
\item Addition and zero together form a commutative monoid.
\item Copying and deletion together form a commutative comonoid.
\item This monoid and comonoid together form a bialgebra.
\item The multiplier relation is a monoid homomorphism for addition, and a comonoid homomorphism for copying.
\end{enumerate} 
\end{lemma}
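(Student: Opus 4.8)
The plan is to exploit the remark made just above the definition that composition of linear relations coincides with ordinary relational composition, together with the observation that each of the five relations in Definition~\ref{defn:basicrelations} is the \emph{graph} of a linear map: addition $\blacktriangle$ is the graph of $+\colon k\oplus k\to k$, zero $\newmoon$ of the map $\{0\}\to k$, copying $\nabla$ of the diagonal $\Delta\colon k\to k\oplus k$ sending $a\mapsto(a,a)$, deletion $\ocircle$ of the unique map $k\to\{0\}$, and the multiplier $\br$ of the scaling map $a\mapsto ra$. Since the relational composite of the graphs of two total functions is exactly the graph of their composite, and the monoidal product is the direct sum, every equation of linear relations asserted in the Lemma reduces to the corresponding equation of linear maps built from copies of $k$. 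Thus the whole proof becomes a finite list of elementary verifications using only the vector-space axioms and the distributive law of the field $k$.

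Concretely, I would proceed clause by clause. For (1), associativity, unitality and commutativity of $(\blacktriangle,\newmoon)$ are the identities $a+(b+c)=(a+b)+c$, $a+0=a=0+a$ and $a+b=b+a$. For (2), the coassociativity, counitality and cocommutativity of $(\nabla,\ocircle)$ are the dual identities for the diagonal, which hold because $\Delta$ is a section of each projection $k\oplus k\to k$. For (4), that $\br$ is a monoid homomorphism is the pair of statements $r(a+b)=ra+rb$ and $r\cdot 0=0$, and that it is a comonoid homomorphism is $\Delta(ra)=(ra,ra)=(\br\oplus\br)\Delta(a)$ together with compatibility of $\br$ with deletion; all of these are immediate from distributivity and the fact that deletion lands in the trivial space.

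The only clause requiring genuine bookkeeping is (3), the bialgebra laws, where I would check the four standard coherence equations: compatibility of $\nabla$ with $\blacktriangle$, namely $\nabla\circ\blacktriangle=(\blacktriangle\oplus\blacktriangle)\circ(\id\oplus\sigma\oplus\id)\circ(\nabla\oplus\nabla)$ with $\sigma$ the symmetry on the middle $k\oplus k$; compatibility of $\nabla$ with the unit $\newmoon$; compatibility of the counit $\ocircle$ with $\blacktriangle$; and the scalar law relating $\ocircle$ and $\newmoon$. Evaluating the first (nontrivial) equation on a pair $(a,b)$, the right-hand side produces $(a,a,b,b)\mapsto(a,b,a,b)\mapsto(a+b,a+b)$, which matches $\Delta(a+b)$ on the left, while the remaining three reduce to $0\mapsto(0,0)$, deletion-after-addition equal to deletion on both inputs, and $0\mapsto\ast$. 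I expect the tracking of which copy of $k$ is routed where through the symmetry $\sigma$ to be the main obstacle, though it is still routine; I would draw the corresponding string diagrams to keep this bookkeeping transparent rather than manipulating tuples by hand.
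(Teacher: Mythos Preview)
Your argument is correct: observing that each of the five linear relations is the graph of a linear map, and that graph-formation respects composition and the monoidal product, reduces every claimed equation of linear relations to an equation of linear maps on finite direct sums of $k$, which you then verify elementwise. The bialgebra check you spell out is accurate, and the remaining clauses are indeed one-line consequences of the field axioms.

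The paper, however, does not prove this lemma at all: it is explicitly stated ``without proof'' and attributed to the reference~\cite{baezerbele}. So there is no paper proof to compare against; you have simply supplied the details the authors chose to cite rather than include. Your graph-of-a-linear-map reduction is a tidy way of organising those details and would be a perfectly acceptable inline proof if one were wanted.
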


\subsection{Complementary dagger-Frobenius structure}

In this section we prove new results about the structures introduced in Section~\ref{sec:signalintroduction}. We begin by establishing the existence of dagger-Frobenius properties of the addition and copying operations.
\begin{lemma}
\label{lem:signalfrobenius}
In $\cat{FinRel}_k$, the addition and copying linear relations separately form commutative dagger-Frobenius algebras.
\end{lemma}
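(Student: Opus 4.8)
The plan is to verify the defining properties of a commutative dagger-Frobenius comonoid directly for each structure, reusing Lemma~\ref{lem:initialproperties} for everything except the Frobenius identity~\eqref{eq:frobenius}. For the copying structure I take the comonoid to be copying $\nabla$ (comultiplication) together with deletion $\ocircle$ (counit); for the addition structure I dualize and take the comonoid to be $\blacktriangle^\dagger$ (comultiplication) and $\newmoon^\dagger$ (counit). In each case coassociativity, counitality, and cocommutativity are supplied by Lemma~\ref{lem:initialproperties} (after applying the dagger in the addition case), and the associated monoid is automatically the dagger of the comonoid because $\cat{FinRel}_k$ is a dagger category. Hence the only genuinely new content is the Frobenius law.

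Then I would verify the Frobenius identity by an explicit subspace computation. Since composition of linear relations coincides with composition of the underlying relations, I evaluate each side of~\eqref{eq:frobenius} as a relation on $(k\oplus k)\oplus(k\oplus k)$. For copying, both sides reduce to $\{((a,a),(a,a)) \mid a \in k\}$: sliding a pair through one copy and one co-copy forces the two interior wires to coincide, collapsing the whole expression to the ``equal pair maps to equal pair'' relation. For addition --- using the comultiplication $\blacktriangle^\dagger$ and multiplication $\blacktriangle$ --- both sides reduce to $\{((a,b),(c,d)) \mid a+b = c+d\}$, since the only constraint surviving a co-addition followed by an addition is conservation of the total sum. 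As the two sides agree in each case, \eqref{eq:frobenius} holds.

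The main obstacle is not conceptual but bookkeeping: one must track the direct-sum factor orderings and the associativity isomorphisms carefully when composing the relations, since an index slip is essentially the only way the computation can go wrong. A cleaner, almost computation-free alternative is worth noting: the addition relation is exactly the multiplication relation of the abelian group $(k,+)$, while copying is the diagonal relation on the underlying set of $k$, and both the group construction and the diagonal construction are standard dagger-Frobenius algebras in $\cat{Rel}$; because these relations are linear subspaces and composition in $\cat{FinRel}_k$ agrees with composition in $\cat{Rel}$, the Frobenius structures transport verbatim. Either route additionally yields specialness for free (although the lemma does not require it), since a copy followed by a co-copy, and a co-addition followed by an addition, each compose to the identity.
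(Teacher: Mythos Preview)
Your proposal is correct and follows essentially the same approach as the paper: invoke Lemma~\ref{lem:initialproperties} for the (co)monoid axioms, then verify the Frobenius law~\eqref{eq:frobenius} by directly computing both composites as subspaces, obtaining the ``all-equal'' relation $\{((a,a),(a,a))\}$ for copying and the ``sum-preserved'' relation $\{((a,b),(c,d)) \mid a+b=c+d\}$ for addition. The paper parametrizes the addition case as $(a,b)\mapsto(a+c,b-c)$ for all $c$, but this is the same subspace. Your aside that both structures arise from standard dagger-Frobenius algebras in $\cat{Rel}$ (the group algebra of $(k,+)$ and the diagonal on the set $k$) is a nice computation-free alternative that the paper does not mention.
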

\begin{proof}
That addition and zero forms a commutative monoid, and copying and deletion forms a commutative comonoid, is established in Lemma~\ref{lem:initialproperties}. It remains to demonstrate that the dagger-Frobenius conditions hold for each of these structures.

We first evaluate the action of the following composite linear relation, which is one side of the  dagger-Frobenius condition for the copying linear relation:
\begin{equation}
\begin{aligned}
\begin{tikzpicture}[string, scale=1, yscale=1]
\node (black) [uptriangle, fill=white] at (0,1) {};
\node (white) [downtriangle, fill=white] at (1,0.2) {};
\draw [shorten <=-1pt, shorten >=-1pt] (black.corner 3) to [out=-30, in=150] (white.corner 3);
\draw [shorten >=-1pt] (-0.8,-0.5) to [out=up, in=-150] (black.corner 2);
\draw [shorten <=-1pt] (white.corner 1) to (1,-0.5);
\draw [shorten >=-1pt] (1.8,1.7) to [out=down, in=30] (white.corner 2);
\draw [shorten <=-1pt] (black.corner 1) to (0,1.7);
\node [below] at (-0.8,-0.5) {$a\vphantom{|}$};
\node [below] at (1,-0.5) {$b$};
\node [below] at (0.3,0.8) {$b$};
\node [above] at (0,1.7) {$a$};
\node [above] at (-1,1.63) {(if $a=b$)};
\node [above] at (1.8,1.7) {$b$};
\end{tikzpicture}
\end{aligned}
\end{equation}
We see that this composite relation can be defined as  $\forall a, (a,a) \smallwhitediagram (a,a)$, and similarly it can be shown that $\forall a, (a,a) \smallwhitediagramflip (a,a)$. Hence we have demonstrated the dagger-Frobenius condition $\smallwhitediagram = \smallwhitediagramflip$.

For the addition linear relation, we calculate the left side of the dagger-Frobenius condition as follows:
\begin{equation}
\begin{aligned}
\begin{tikzpicture}[string, scale=1.1, yscale=1.1]
\node (black) [uptriangle] at (0,1) {};
\node (white) [downtriangle] at (1,0.2) {};
\draw [shorten <=-1pt, shorten >=-1pt] (black.corner 3) to [out=-30, in=150] (white.corner 3);
\draw [shorten >=-1pt] (-0.8,-0.5) to [out=up, in=-150] (black.corner 2);
\draw [shorten <=-1pt] (white.corner 1) to (1,-0.5);
\draw [shorten >=-1pt] (1.8,1.7) to [out=down, in=30] (white.corner 2);
\draw [shorten <=-1pt] (black.corner 1) to (0,1.7);
\node [below] at (-0.8,-0.5) {$a\vphantom{|}$};
\node [below] at (1,-0.5) {$b$};
\node [left] at (0.65,0.4) {$\forall c, c$};
\node [above] at (0,1.7) {$a+c$};
\node [above] at (-1,1.63) {$\forall c,$};
\node [above] at (1.8,1.7) {$b-c$};
\end{tikzpicture}
\end{aligned}
\end{equation}
We can write this action succinctly as $\forall c,(a,b) \smallblackdiagram (a+c,b-c)$. Similarly, the other composite can be shown to have action $\forall c,(a,b) \smallblackdiagramflip (a-c,b+c)$. Making the substitution $c':=-c$, we can rewrite this second definition as $\forall c',(a,b) \smallblackdiagramflip (a+c',b-c')$. This demonstrates that $\smallblackdiagram = \smallblackdiagramflip$ as linear relations, verifying the dagger-Frobenius condition for the addition linear relation.
\end{proof}

Furthermore, these Frobenius algebras interact as complementary structures.
\begin{lemma}
In $\cat{FinRel}_k$, the addition and copying linear relations form complementary dagger-Frobenius algebras.
\end{lemma}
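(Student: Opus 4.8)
The goal is to verify the complementarity equation~\eqref{eq:complementarity} of Definition~\ref{def:complementarity}, with the gray structure taken to be addition and the white structure taken to be copying; the reverse orientation then follows from the equivalent alternative condition noted immediately after that definition. First I would record that both structures genuinely qualify as inputs to the definition: Lemma~\ref{lem:signalfrobenius} supplies the dagger-Frobenius and commutativity (hence symmetry) conditions, and specialness is a one-line check, since copying followed by its converse merge returns the identity relation $\{(a,a):a\in k\}$, and splitting followed by addition likewise returns the identity. A useful preliminary simplification is that $\cat{FinRel}_k$ has essentially trivial scalars: the monoidal unit is the zero space $\{0\}$, so $\Hom(\{0\},\{0\})$ contains a single linear relation, and consequently the dimension scalar $\ud(A)$ appearing in~\eqref{eq:complementarity} is forced to be the identity. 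The equation therefore reduces to a bare equality of two linear relations $k \sto k$.

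The plan is then to evaluate each side explicitly as a subspace of $k \oplus k$, in exactly the element-tracking style already used in the proof of Lemma~\ref{lem:signalfrobenius}. On the left I would compose the constituent relations in the order dictated by the diagram: the gray comultiplication (the converse of addition, splitting an input nondeterministically into all pairs with a fixed sum), followed by the white multiplication (the converse of copying, which forces its two arguments to agree), carefully carrying the existential quantifier introduced by relational composition. On the right I would compose the gray counit with the white unit to obtain the ``disconnected'' relation in which the two wires decouple. Matching the two resulting generating sets of $k \oplus k$ against each other is then a finite, routine calculation.

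The step I expect to be the crux is the bookkeeping of that relational composition: because the two Frobenius structures are \emph{dual} rather than identical, reconciling the left-hand composite with the disconnected right-hand side requires a negation to be absorbed on one leg, in precisely the way the substitution $c' := -c$ was used to close the Frobenius computation in Lemma~\ref{lem:signalfrobenius}. Getting this sign (equivalently, the underlying antipode) to fall out of the existential quantifier, uniformly in the characteristic of $k$, is where the argument really has content; once it is in place, the two subspaces coincide. Finally, the converse composite and the dagger-flipped versions of the equation hold automatically by the symmetry remarks following Definition~\ref{def:complementarity}, and the whole phenomenon can be read as a manifestation of the bialgebra structure already recorded in Lemma~\ref{lem:initialproperties}.
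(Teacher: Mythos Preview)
Your approach is correct, but it differs from the paper's. The paper does not verify equation~\eqref{eq:complementarity} directly; instead it invokes Theorem~\ref{thm:complementarityunitary}. Concretely, the paper computes the composite~\eqref{eq:generalizedcnot} built from the two structures and finds that, as a linear relation, it acts as $(a,b)\mapsto(a+b,b)$; this is single-valued and total with converse $(a,b)\mapsto(a-b,b)$, hence unitary, and complementarity follows immediately from the characterization theorem. Your route---tracing the loop diagram in~\eqref{eq:complementarity} element-by-element and absorbing the sign coming from the gray cap/white cup pair (the antipode)---is a legitimate and self-contained alternative that avoids appealing to Theorem~\ref{thm:complementarityunitary}, at the cost of a slightly more intricate relational computation. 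The paper's route is shorter precisely because it cashes in the abstract machinery already developed, reducing the check to the bijectivity of a shear map; your observation that the scalar $\ud(A)$ is forced to be the identity in $\cat{FinRel}_k$ is correct and is also implicitly needed in the paper's argument to drop the $\sqrt{\ud(A)}$ normalization in~\eqref{eq:generalizedcnot}.
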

\begin{proof}
We have already established the Frobenius properties in Lemma~\ref{lem:signalfrobenius}. It remains to demonstrate the complementarity condition.

We evaluate the action of the following composite relation:
\begin{equation}
\begin{aligned}
\begin{tikzpicture}[string]
\node (black) [uptriangle] at (0,1) {};
\node (white) [downtriangle, fill=white] at (1,0.2) {};
\draw [shorten <=-1pt, shorten >=-1pt] (black.corner 3) to [out=-30, in=150] (white.corner 3);
\draw [shorten >=-1pt] (-0.8,-0.5) to [out=up, in=-150] (black.corner 2);
\draw [shorten <=-1pt] (white.corner 1) to (1,-0.5);
\draw [shorten >=-1pt] (1.8,1.7) to [out=down, in=30] (white.corner 2);
\draw (0,1) to (0,1.7);
\node [below] at (-0.8,-0.5) {$a$};
\node [below] at (1,-0.5) {$b$};
\node [below] at (0.3,0.8) {$b$};
\node [above] at (0,1.7) {$a+b$};
\node [above] at (1.8,1.7) {$b$};
\end{tikzpicture}
\end{aligned}
\end{equation}
Writing $K$ for this linear relation, we see that $K$ is given by $\forall a,b\in k, (a,b) K (a+b,b)$. By Definition~\ref{def:converse} of the converse relation, we see that $K ^\dag$ is defined as $\forall a,b\in k,(a+b,b)K^\dag (a,b)$, or equivalently $\forall a,b \in k,(a,b) K^\dag (a-b,b)$. Since $K$ is single-valued and total, it is clear that $K$ and $K^\dag$ are inverse, as can be shown by explicit calculation. By Theorem~\ref{thm:complementarityunitary}, it follows that addition and copying are complementary. 
\end{proof}

The final property that we establish is that multipliers are self-conjugate.
\begin{lemma}
In $\cat{FinRel}_k$, a multiplier $\br : k \sto k$ is a self-conjugate morphism.
\end{lemma}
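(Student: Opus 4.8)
The plan is to verify the self-conjugacy equation~\eqref{eq:comonoidhomomorphismselfconjugate} directly, by computing each of its two sides as an explicit linear relation, in the same element-tracing style used for the preceding lemmas in this section. By item~4 of Lemma~\ref{lem:initialproperties} the multiplier is a comonoid homomorphism for copying, so in the self-conjugacy diagram both the source (``black'') and target (``gray'') comonoids are the copying dagger-Frobenius algebra on $k$, whose Frobenius property was established in Lemma~\ref{lem:signalfrobenius}. I will also use Definition~\ref{def:converse}, which identifies the dagger appearing on the right-hand side of~\eqref{eq:comonoidhomomorphismselfconjugate} with the converse relation.

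First I would record the cups and caps produced when the two wires of the multiplier are bent around. For the copying algebra, both the unit-then-comultiplication composite and the multiplication-then-counit composite evaluate to the diagonal subspace $\{(a,a) : a \in k\}$ of $k \oplus k$, used once as a state and once as an effect. Tracing a general element through the left-hand composite of~\eqref{eq:comonoidhomomorphismselfconjugate} is then immediate: the copying cup emits a pair of equal values $(a,a)$, one copy feeding the multiplier and the other becoming the output wire; the multiplier sends $a \mapsto ra$; and the copying cap identifies this with the external input, forcing the latter to equal $ra$. Hence the left-hand side is the relation $\{(ra,a) : a \in k\}$.

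For the right-hand side I would simply read off the converse of the multiplier: since $\br = \{(a,ra) : a \in k\}$, Definition~\ref{def:converse} gives its dagger as $\{(ra,a) : a \in k\}$, which is exactly the relation found for the left-hand side, completing the verification. Conceptually, the diagonal cups and caps make $k$ self-dual, and bending any linear relation around them yields its converse; since the converse is also the dagger, the bent relation always equals the adjoint, which is precisely what~\eqref{eq:comonoidhomomorphismselfconjugate} asserts. The only delicate point is the bookkeeping in the left-hand diagram---keeping straight which bent wire is the input and which is the output, and noting that although the multiplier symbol is drawn symmetrically its dagger is genuinely the converse (the multiplier for $r^{-1}$), so that the two sides really do agree as the single relation $\{(ra,a)\}$.
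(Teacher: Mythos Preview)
Your argument is correct. You verify self-conjugacy by bending $\br$ with the \emph{copying} cups and caps, which are the diagonal $\{(a,a)\}$; tracing gives the relation $\{(ra,a)\}$, which is exactly $\br^\dag$. This is clean, works uniformly for every $r\in k$ (including $r=0$), and your closing remark---that diagonal cups and caps send any relation to its converse, which here coincides with the dagger---explains why the equation holds with essentially no computation.

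The paper takes a genuinely different route: it bends $\br^\dag$ using the \emph{addition} Frobenius structure rather than the copying one. The cap there is the zero counit composed with addition, and the cup is coaddition on the zero unit, so the trace produces an existential witness $b$ together with the constraint $a+b/r=0$, whence the output is $-b=ra$. This matches $\br$, establishing the same identity. The tradeoff is that the paper's calculation is a little more involved (an extra quantifier and a division by $r$, so the $r=0$ case needs a separate check), whereas your choice of structure makes the verification immediate. On the other hand, the paper's choice shows that self-conjugacy also holds with respect to the addition algebra, which fits the later Definition of a multiplier as a homomorphism for \emph{both} structures; your proof establishes the copying case, which is the one directly relevant to the oracle/resistor application.
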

\begin{proof}
We must verify that $\br$ is equal to the transpose of its dagger:
\begin{equation}
\begin{aligned}
\begin{tikzpicture}[string, yscale=1.25]
\draw (0,0) node [below] {$a$} to node [circle, draw, inner sep=0pt, minimum width=13pt, fill=white] {\br} (0,3.4) node [above] {$ra$};
\end{tikzpicture}
\end{aligned}
\qquad=\qquad
\begin{aligned}
\begin{tikzpicture}[string, yscale=1.25]
\draw [lightgray] (-1.3,-0.8) to (2.3,-0.8);
\draw [lightgray] (-1.3,-0.2) to (2.3,-0.2);
\draw [lightgray] (-1.3,0.5) to (2.3,0.5);
\draw [lightgray] (-1.3,1.2) to (2.3,1.2);
\draw [lightgray] (-1.3,1.8) to (2.3,1.8);
\node (black) [uptriangle] at (0,1.2) {};
\node (white) [downtriangle] at (1,-0.2) {};
\draw [shorten <=-1pt, shorten >=-1pt] (black.corner 3) to [out=-30, in=150] (white.corner 3);
\draw [shorten >=-1pt] (-1,-1.2) to [out=up, in=-150, out looseness=0.5] (black.corner 2);
\draw [shorten <=-1pt] (white.corner 1) to (1,-0.8);
\draw [shorten >=-1pt] (2,2.2) to [out=down, in=30, out looseness=0.5] (white.corner 2);
\draw [shorten <=-1pt] (black.corner 1) to (0,1.7);
\node (r) [circle, draw, inner sep=0pt, minimum width=13pt, fill=white] at (0.5,0.5) {$\br \smash{\scriptstyle {}^\dagger}$};
\node [below] at (-1,-1.2) {$a$};
\node [left] at (0.55,0.1) {$b\vphantom{,|}$};
\node at (1.3,-0.55) {$0\vphantom{,|}$};
\node at (1.9,0.1) {$-b\vphantom{,|}$};
\node [left] at (-1.2,0.1) {$\forall b,\vphantom{,|}$};
\node [left] at (-0.8,0.1) {$a\vphantom{,|}$};
\node [left] at (-1.2,0.8) {$\forall b,\vphantom{,|}$};
\node [left] at (1.3,0.8) {$b/r\vphantom{,|}$};
\node [left] at (1.5,1.5) {$a+b/r\vphantom{,|}$};
\node at (2.2,0.8) {$-b\vphantom{,|}$};
\node at (2.35,1.5) {$-b\vphantom{,|}$};
\node [left] at (-0.5,0.8) {$a\vphantom{,|}$};
\node [left] at (-1.2,1.5) {$\forall b,\vphantom{,|}$};
\node [above] at (2,2.2) {$ra$};
\node [circle, draw, fill=black] at (0,1.8) {};
\node [circle, draw, fill=black] at (1,-0.8) {};
\end{tikzpicture}
\end{aligned}
\end{equation}
On the right-hand side we see that $a$ is related to $-b$, with the constraint that $a+b/r=0$, i.e. that $-b=ra$. This is equal as a linear relation to that of $\br$ itself, given on the left-hand side. This establishes the result.
\end{proof}

Given these results, we are motivated to make the following definitions which generalize the motivating example of the theory of signal-flow diagrams in $\cat{FinRel}_k$.
\begin{defn}
In a symmetric monoidal dagger-category, a \emph{signal-flow structure} is an object $A$ equipped with a pair of commutative dagger-Frobenius algebras, which interact as a bialgebra. A \emph{multiplier} for this signal-flow structure is a self-conjugate morphism $\br:A \to A$ which is a monoid and comonoid homomorphism for both structures.
\end{defn}
\begin{defn}
Given a signal-flow structure equipped with a multiplier $\br$, the \emph{resistor} associated to $\br$ is the composite given by diagram~\eqref{eq:resistor}.
\end{defn}

\noindent
We then apply our earlier result to show that resistors are always unitary.
\begin{corollary}
Given a signal-flow structure equipped with a multiplier, its resistor is unitary.
\end{corollary}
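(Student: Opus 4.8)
The plan is to recognize the resistor of~\eqref{eq:resistor} as an instance of the oracle of Definition~\ref{oracle}, and then to invoke Theorem~\ref{thm:familyofunitaries} directly. Comparing the two pictures, the resistor has exactly the shape of the oracle composite~\eqref{eq:oracle}: the copying relation comultiplies the incoming current, one branch leaves the diagram unchanged while the other is passed through the multiplier and then combined, via addition, with the incoming voltage. So the first task is to set up the dictionary between the two diagrams.

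Concretely, I would take both $\blackcomonoid{A}$ and $\graycomonoid{B}$ to be the copying structure on $A$, take $\whitecomonoid{B}$ to be the addition structure (so that the multiplication at the top of~\eqref{eq:oracle} is precisely the addition at the top of~\eqref{eq:resistor}, its comultiplication being the dagger of addition), and take the self-conjugate comonoid homomorphism $f$ to be the multiplier $\br$. Under this identification the resistor becomes literally the composite~\eqref{eq:oracle}, with the copied branch exiting on the left and the multiplied branch feeding the addition.

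Next I would check that the three hypotheses of Definition~\ref{oracle} are satisfied by the abstract data of a signal-flow structure equipped with a multiplier. Self-conjugacy of $f=\br$, and the fact that $\br$ is a comonoid homomorphism for copying, are both immediate from the definition of a multiplier. That each of the two structures is a \emph{symmetric} dagger-Frobenius comonoid follows from commutativity, since commutative dagger-Frobenius algebras are symmetric. The one genuinely substantive point is complementarity of the copying and addition structures: this is not asserted verbatim in the definition of a signal-flow structure, so it must be extracted from the bialgebra interaction between the two algebras, using the fact that a pair of dagger-Frobenius algebras which form a bialgebra are (strongly, hence) complementary.

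I expect this complementarity step to be the main obstacle, together with the bookkeeping of the scalar $\sqrt{\ud(A)}$. The oracle~\eqref{eq:oracle} carries this normalizing scalar, whereas the resistor~\eqref{eq:resistor} is written without it, so I would need either to confirm that the scalar is trivial in the relevant setting—as it is in $\cat{FinRel}_k$, where the only scalar is the identity—or to absorb it so that the two composites genuinely agree. Once complementarity is established from the bialgebra axioms and the scalar is accounted for, Theorem~\ref{thm:familyofunitaries} applies verbatim and delivers unitarity of the resistor.
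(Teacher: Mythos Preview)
Your approach is exactly the paper's: recognise the resistor as an instance of the oracle of Definition~\ref{oracle} and invoke Theorem~\ref{thm:familyofunitaries}; the paper's entire proof is the single sentence ``An immediate application of Theorem~\ref{thm:familyofunitaries}.'' You have in fact been more scrupulous than the paper in flagging the complementarity-from-bialgebra step and the scalar $\sqrt{\ud(A)}$ bookkeeping as points requiring attention---the paper glosses over both.
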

\begin{proof}
An immediate application of Theorem~\ref{thm:familyofunitaries}.
\end{proof}

\noindent
The appearance of this unitary structure in both quantum algorithm and the signal-flow calculus highlights the general role that this abstract structure can play in different process theories.

\bibliographystyle{eptcs}

\end{document}